\def\Z{\mathbb{Z}}
\def\p{ \partial }
\def\bq{ \begin{equation} }
\def\eq{ \end{equation} }
\def\ben{ \begin{eqnarray} }
\def\en{ \end{eqnarray} }
\def\frac#1#2{{#1\over #2}}
\def\on#1#2{\mathop{\vbox{\ialign{##\crcr\noalign{\kern2pt}
$\scriptstyle{#2}$\crcr\noalign{\kern2pt\nointerlineskip}
\kern-2pt$\hfil\displaystyle{#1}\hfil$\crcr}}}\limits}
\newtheorem{prop}{Proposition}
\newtheorem{theor}{Theorem}
\newtheorem{Defi}{Definition}
\newtheorem{remark}{Remark}
\newtheorem{cor}{Corollary}
\newcommand{\todo}[1][\null]{\ensuremath{\clubsuit}}
\begin{document}

\title{Defining differential equations for modular forms and Jacobi forms}
\author{S. Opanasenko$^1$, E.V. Ferapontov$^{1, 2}$}
   \date{}
\vspace{-20mm}
   \maketitle
\vspace{-7mm}
\begin{center}
${^1}$ Department of Mathematical Sciences \\
Loughborough University \\
Loughborough, Leicestershire LE11 3TU, UK \\[1ex]
    \ \\
$^2$Institute of Mathematics, Ufa Federal Research Centre\\
Russian Academy of Sciences, 112 Chernyshevsky Street \\
Ufa 450008, Russian Federation\\
e-mails: \\
\texttt{S.Opanasenko@lboro.ac.uk}\\
\texttt{E.V.Ferapontov@lboro.ac.uk}
\end{center}

\medskip

\begin{abstract}
It is well known that every modular form~$f$ on a discrete subgroup $\Gamma\leqslant \textrm{SL}(2, \mathbb R)$ satisfies a third-order nonlinear ODE that expresses algebraic dependence of the functions~$f$, $f'$, $f''$ and~$f'''$. These ODEs are automatically invariant under the Lie group $\textrm{SL}(2, \mathbb R)$, which acts on the solution spaces thereof with an open orbit (and the discrete stabiliser~$\Gamma$ of a generic solution). Similarly,
every modular form satisfies a fourth-order nonlinear ODE that is invariant under the Lie group $\textrm{GL}(2, \mathbb R)$ acting on its solution space with an open orbit. ODEs for modular forms can be compactly expressed in terms of the differential invariants of these actions. The invariant forms of both ODEs define plane algebraic curves naturally associated with every modular form; the corresponding ODEs can be seen as modular parametrisations of the associated curves.

After reviewing examples of nonlinear ODEs satisfied by classical modular forms (such as Eisenstein series, modular forms on congruence subgroups of level two and three, theta constants, and some newforms of weight two), we generalise these results to Jacobi forms; these satisfy involutive third-order PDE systems
that are invariant under the Lie group $\textrm{SL}(2, \mathbb R)\ltimes H$ where $H$ is the Heisenberg group.
\end{abstract}



\section{Introduction}\label{ODEMF:sec:intro}

Modular forms on a discrete subgroup $\Gamma\leqslant \textrm{SL}(2, \mathbb R)$ are holomorphic functions~$f(\tau)$ on the upper half-plane~$\mathcal H$
that satisfy the modular transformation property
\[
f\left(\frac{a\tau +b}{c\tau+d}\right)=(c\tau+d)^kf(\tau),\qquad \left(\begin{array}{cc} a&b\\c&d\end{array}\right)\in \Gamma,
\]
where $k$ is the weight of a modular form; we refer to~\cite{123} for a general theory.
It is well known that every modular form satisfies a third-order nonlinear ODE that expresses algebraic dependence of the functions $f$, $f'$, $f''$ and~$f'''$,
which is shown in, e.g., \cite[Theorem~2]{Resnikoff1966} based on the prior idea of~\cite{Hurwitz1889}
that two meromorphic functions on a Riemann surface are algebraically dependent.
What was not explicitly noted however is that every such ODE is automatically
$\mathrm{SL}(2,\mathbb R)$-invariant and can be represented by an algebraic relation
\[
F(I_k, J_k)=0
\]
where $I_k$ and $J_k$ are differential invariants of a certain $\mathrm{SL}(2,\mathbb R)$-action,
\begin{gather*}
\displaystyle I_k=\frac{k ff''-(k+1)f'^2}{f^{2+\frac4k}},
\quad
\displaystyle J_k=\frac{k^2f^2f'''-3k(k+2)ff'f''+2(k+1)(k+2)f'^3}
{f^{3+\frac6k}},
\end{gather*}
see Section~\ref{sec:invar} for their expressions in terms of the Rankin--Cohen brackets.
Thus, there is a plane algebraic curve $C\colon F(I_k, J_k)=0$ naturally associated with every modular form (note that relation $F$ depends on the modular form $f$).
The case of particular interest in number theory is where~$f$ is a newform of weight $k=2$ on a congruence subgroup $\Gamma_0(N)$.
In this case, modular functions~$I_2$ and~$J_2$ provide a modular parametrisation of~$C$.
For special values of~$N\geq 11$, one obtains elliptic curves over $\mathbb{Q}$ with a modular parametrisation; these curves $C$ have the same $j$-invariants as the elliptic curves associated with the modular form $f(\tau)$ via  the Taniyama-Shimura-Weil conjecture,
see e.g. \cite{BK, Zagier1984, Zagier1985} and Examples of Section~\ref{sec:ex} for some explicit formulae and further discussion.

Similarly, every modular form~$f(\tau)$ of weight~$k$ satisfies a fourth-order ODE, compare with~\cite[Remark, p.~342]{Resnikoff1966},
\[
{\cal F}(P_k, Q_k)=0,
\]
where $P_k$ and $Q_k$ are differential invariants of the order~$3$ and~$4$ of a certain $\mathrm{GL}(2,\mathbb R)$-action,
\begin{equation*}
\begin{array}{c}
\displaystyle P_k=\frac{(k^2f^2f'''-3k(k+2)ff' f''+2(k+1)(k+2)f'^3)^2}
{(kff''-(k+1)f'^2)^3},\\
\ \\
\displaystyle Q_k=\frac{f^2\left(k(k{+}1)ff''''-4(k{+}1)(k{+}3)f' f'''+3(k{+}2)(k{+}3)f''^2\right)}{(kff''-(k+1)f'^2)^2},
\end{array}
\end{equation*}
see Section~\ref{sec:invar} for their expressions in terms of the Rankin--Cohen brackets.
Thus, there is another plane algebraic curve ${\cal C}: {\cal F}(P_k, Q_k)=0$ naturally associated with every modular form (the relation $\cal F$ depends on the modular form $f$).
Note that the invariants $P_k$ and $Q_k$ are modular functions for every $k$, and provide a modular parametrisation of $\cal C$.
There is a natural covering map $C \to {\cal C}$, see Remark~\ref{F(P,Q)} in Section~\ref{sec:invar}.
In all examples discussed in this paper the second curve $\cal C$ turns out to be rational (even for  congruence subgroups $\Gamma_0(N)$ of higher genus), although we have no general explanation of this fact.

The origin of ODEs for classical modular forms on~$\textrm{SL}(2, \mathbb Z)$ are the Ramanujan equations for the Eisenstein series.
In what follows, we use the notation $q=\mathrm e^{2\pi i \tau}$ and denote by prime the operator
$q\frac{d}{dq}=\frac{1}{2\pi i} \frac{d}{d\tau}$. The Eisenstein series are defined as
\[
E_k(\tau)=1-\frac{2k}{B_k}\sum_{n=1}^{\infty}\sigma_{k-1}(n)q^n,
\]
where $B_k$ are the Bernoulli numbers and $\sigma_{k-1}(n)$ denotes the sum of the $(k-1)$st powers of the positive divisors of~$n$.
Explicitly, we have
\begin{gather*}
E_2(\tau)=1-24\sum_{n=1}^{\infty}\sigma_{1}(n)q^n=1-24q-72q^2-\dots,\\
E_4(\tau)=1+240\sum_{n=1}^{\infty}\sigma_{3}(n)q^n=1+240q+2160q^2+\dots,\\
E_6(\tau)=1-504\sum_{n=1}^{\infty}\sigma_{5}(n)q^n=1-504q-16632q^2-\dots,
\end{gather*}
note that $E_2$ is only quasi-modular. The Eisenstein series $E_2$, $E_4$ and~$E_6$ satisfy the Ramanujan system of ODEs,
\begin{equation}\label{Raman}
E_2'=\frac{E_2^2-E_4}{12},\qquad E_4'=\frac{E_2E_4-E_6}3,\qquad E_6'=\frac{E_2E_6-E_4^2}2,
\end{equation}
which is invariant under the action of the Lie group $\mathrm{SL}(2,\mathbb R)$ defined as
\[
\tilde \tau=\frac{a\tau+b}{c\tau+d},\quad \tilde E_2=(c\tau+d)^2E_2+12c(c\tau+d),\quad \tilde E_4=(c\tau+d)^4E_4, \quad \tilde E_6=(c\tau+d)^6E_6.
\]
Every modular form~$f$ on~$ \textrm{SL}(2, \mathbb Z)$ is a homogeneous polynomial in $E_4$ and $E_6$.
Differentiation of~$f$ with the help of~(\ref{Raman}) gives four polynomial expressions for $f$, $f'$, $f''$, $f'''$ in terms of $E_2$, $E_4$ and~$E_6$.
The elimination of $E_2$, $E_4$, $E_6$ leads to a third-order nonlinear ODE for~$f$, which inherits $\mathrm{SL}(2,\mathbb R)$-symmetry from the Ramanujan equations.

In special cases, both (third- and fourth-order) ODEs for modular forms and their invariance properties have been discussed in the literature, see, e.g., \cite{Ablowitz2006, 123, Maier, O3}.
For instance, the modular discriminant \[\Delta=q\prod_{n=1}^{\infty}(1-q^n)^{24}=\frac1{1728}(E_4^3-E_6^2)\]
satisfies an $\mathrm{SL}(2,\mathbb R)$-invariant third-order ODE~(\cite[Proposition~3]{Resnikoff1966})
\begin{gather}
36\Delta^4\Delta'''^2-14(18\Delta \Delta''-13\Delta'^2)\Delta^2\Delta'\Delta'''+48\Delta^3\Delta''^3\nonumber\\
\qquad+285\Delta^2\Delta'^2\Delta''^2-468\Delta\Delta'^4\Delta''+169\Delta'^6+48\Delta^7=0,\nonumber
\end{gather}
as well as the more well-known van der Pol--Rankin equation, which is $\mathrm{GL}(2,\mathbb R)$-invariant fourth-order ODE,
\begin{gather}
2\Delta^3\Delta''''-10\Delta^2\Delta'\Delta'''-3\Delta^2\Delta''^2+24\Delta\Delta'^2\Delta''-13\Delta'^4=0,\nonumber
\end{gather}
possessing an extra scaling symmetry $\Delta \to \lambda \Delta$ that is not present in the third-order ODE;
see Example~1 of Section~\ref{sec:ex} for the invariant forms of both equations. Note that the fourth-order ODE is a differential consequence of the third-order ODE.

It is precisely via the differential equations $F(I_k,J_k)=0$ and $\mathcal F(P_k,Q_k)=0$ that modular forms feature in various contexts in mathematical physics.
In this paper, we discuss the invariance properties of ODEs for modular forms from the point of view of symmetry analysis of differential equations,
as well as an extension of these results to Jacobi forms.
Below we list some examples of ODEs for modular forms originating from the theory of dispersionless integrable PDEs in~3D.

\medskip

\noindent {\bf First-order integrable Lagrangians.} Paper~\cite{FKT} gives a characterisation of first-order integrable Lagrangians of the form $\int F(u_{x_1}, u_{x_2}, u_{x_3})\, \mathrm dx$. 
It was observed in~\cite{FO, CFOZ} that the corresponding Lagrangian densities~$F$ are related to Picard modular forms.
In particular, for densities of the form $F=u_{x_1}u_{x_2}f(u_{x_3})$, the integrability conditions
lead to a fourth-order $\mathrm{GL}(2, \mathbb{R})$-invariant ODE for $f(\tau)$,
\[
ff''''(ff''-2f'^2)-f^2f'''^2+2f'(ff''+4f'^2)f'''-9f'^2f''^2=0,
\]
whose general solution is the Eisenstein series $E_{1,3}(\tau)$,
\[
f(\tau)=E_{1, 3}(\tau)=\sum_{(\alpha,\beta)\in\Z^2}q^{(\alpha^2-\alpha\beta+\beta^2)}=1+6q+6q^3+6q^4+12q^7+.....,
\]
see Example~4 of Section~\ref{sec:ex} for further details.

\medskip

\noindent {\bf Hirota type equations.} Paper~\cite{FHK} studies integrability of dispersionless Hirota-type equations in 3D, $F(u_{x_ix_j})=0$, where $u(x_1, x_2, x_3)$ is a function of three independent variables, and $u_{x_ix_j}$ denote second-order partial derivatives. It was shown in \cite{CF} that the `generic' integrable Hirota master-equation is expressible via genus three theta constants. In particular, for equations of the form
\[
u_{x_3x_3} -\frac{u_{x_1x_2}}{u_{x_1x_3}}-\frac{1}{6}h(u_{x_1x_1})u_{x_1x_3}^{2}=0,
\]
the integrability conditions lead to the $\mathrm{SL}(2, \mathbb{R})$-invariant Chazy equation for $h(t)$ \cite{Pavlov},
\[
h_{ttt}+2hh_{tt}-3h_t^2=0,
\]
whose general solution is expressed in terms of the Eisenstein series $E_2$,
\[
h(t)=E_2(it/\pi)=1-24\sum_{n=1}^{\infty} \sigma_1(n)\mathrm e^{-2nt}.
\]

\medskip

\noindent {\bf Second-order quasilinear PDEs.} Paper \cite{BFT} studies integrability of 3D second-order quasilinear PDEs of the form
$\sum_{i,j}f_{ij}(u_{x_1}, u_{x_2}, u_{x_3})u_{x_ix_j}=0$ where $u(x_1, x_2, x_3)$ is a function of three independent variables. In particular, for equations of the form
\[
u_{xy}+(u_xu_yr(u_t))_t=0,
\]
 the integrability conditions result in an $\mathrm{SL}(2, \mathbb{R})$-invariant third-order ODE for $r(s)$,
\begin{equation*}
r_{sss}(r_s-r^2) -r_{ss}^2+ 4{r}^3{r_{ss}}+2r_s^3 - 6{r}^2r_s^2= 0,
\end{equation*}
which has appeared in the context  of modular forms of level two \cite{Ablowitz2006}. Its generic solution is given by the Eisenstein series
\[
r(s)=1-8\sum_{n=1}^{\infty}\frac{(-1)^nnw^n}{1-w^n}, \quad w= \mathrm e^{4 s},
\]
which is associated with the congruence subgroup $\Gamma_0(2)$ of the modular group; see Section \ref{sec:ex} for further details.

\medskip

\noindent {\bf Second-order integrable Lagrangians.} Second-order integrable Lagrangians of the form $\int F(u_{xx}, u_{xy}, u_{yy}) \, \mathrm dx\mathrm dy$ were investigated in~\cite{FPX}. Under the ansatz $F=\mathrm e^{u_{xx}}g(u_{xy}, u_{yy})$, the integrability conditions lead to the following constraints for $g(z, c)$
(we set $z=u_{xy}$, $c=u_{yy}$):
\begin{equation}\label{eqg}
\begin{array}{c}
gg_{zcc}=3g_{cc}g_{z}-2 g_{zc} g_{c},\\
gg_{zzz}=g_{z}g_{zz}+4g_{zc}g-4g_{z}g_{c},\\
gg_{ccc}=g_{c}g_{cc} +2 g_{cc} g_{zz}-2 (g_{zc})^2,\\
gg_{zzc}=2g_{z}g_{zc}-g_{c}g_{zz}+2 g g_{cc}-2(g_{c})^2.
\end{array}
\end{equation}
This over-determined system for $g$ is in involution and its generic solution can be represented in the form
\[
g(z, c)=[\Delta(ic/\pi)]^{-1/8}\theta_1(ic/\pi, z)
\]
where $\Delta$ is the modular discriminant and $\theta_1$ is the Jacobi theta function,
\begin{gather*}
\theta_1(\tau, z)=2\sum_{n=0}^{\infty} (-1)^n\mathrm e^{\pi i(n+1/2)^2\tau}\sin[(2n+1)z].
\end{gather*}

\medskip

The reason for the occurrence of modular forms in the above classification results is a remarkable construction of~\cite{OS} that parametrises broad classes of dispersionless integrable systems in 3D via generalised hypergeometric functions. For special values of the parameters
(where the monodromy group of hypergeometric system is a lattice), this parametrisation leads to integrable PDEs whose coefficients are expressible via modular forms.

The structure of the paper is as follows. In Section 2 we discuss differential equations for modular forms by first describing
a general construction of $\mathrm{SL}(2, \mathbb{R})$- and $\mathrm{GL}(2, \mathbb{R})$-invariant equations in terms of the differential invariants of the corresponding group actions (Section~\ref{sec:invar}), and then illustrating the general constructions by numerous examples of modular forms on various congruence subgroups (Section~\ref{sec:ex}). Differential systems for Jacobi forms are discussed in Section~\ref{sec:Jac} by first describing differential invariants of the Jacobi group (Section~\ref{sec:Jac0}) and  then illustrating the general theory by several examples (Section~\ref{sec:Jac1}).

\section{Differential equations for modular forms}

After reviewing differential invariants of the standard $\mathrm{SL}(2, \mathbb{R})$- and $\mathrm{GL}(2, \mathbb{R})$-actions occurring in the theory of modular forms, we provide third-order and fourth-order invariant differential equations for the Eisenstein series~$E_4$ and $E_6$,  the modular discriminant~$\triangle$, Jacobi theta constants, Eisenstein series $E_{1,3}$, some modular forms of level two, and some newforms of weight two on various congruence subgroups.

\subsection{Differential invariants}
\label{sec:invar}

Consider the group~$G_k$, which is isomorphic to~$\mathrm{SL(2,\mathbb R)}$, of point transformations on a space with coordinates $(\tau,f)$, which is relevant for modular forms of weight~$k$,
\[
\tilde \tau=\frac{a\tau+b}{c\tau+d}, \quad \tilde f=(c\tau+d)^k f,\quad \text{where}\ \begin{pmatrix} a & b \\ c & d \end{pmatrix}\in\mathrm{SL}(2,\mathbb R).
\]
Its lowest-order differential invariants are
\begin{equation}
 \label{inv}
 \begin{array}{c}
\displaystyle I_k=\frac{k ff''-(k+1)f'^2}{f^{2+\frac4k}}=\frac{[f,f]_2}{(k+1)f^{2+\frac4k}},\quad\\
 \ \\
\displaystyle J_k=\frac{k^2f^2f'''-3k(k+2)ff'f''+2(k+1)(k+2)f'^3}
{f^{3+\frac6k}}=\frac{[f,\, [f,f]_2]_1}{(k+1)f^{3+\frac6k}},
\end{array}
\end{equation}
where $[\cdot,\cdot]_i$ is the $i$th Rankin--Cohen bracket; we follow the notation of~\cite[p.~53]{123}.
Any third-order $G_k$-invariant ODE can be written in terms of these two invariants as
\begin{equation}\label{FIJ}
F(I_k, J_k)=0.
\end{equation}
Note that the Lie algebra~$\mathfrak g_k$ associated with~$G_k$ is spanned by the vector fields $\p_\tau$, $2\tau\p_\tau-kf\p_f$ and $\tau^2\p_\tau-k\tau f\p_f$.
In what follows, we will also need the easily verifiable relation
\begin{equation}\label{diff}
\frac{dI_k}{J_k}=\frac{2\pi i}{k}f^{2/k}d\tau.
\end{equation}

\begin{theor}\label{T1}
The $G_k$-action on the solution space of equation~(\ref{FIJ}) is locally transitive, that is, it possesses an open orbit.
\end{theor}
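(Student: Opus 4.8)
The plan is to exploit the equality of dimensions: $G_k\cong\mathrm{SL}(2,\mathbb R)$ is three-dimensional, and (assuming $F$ genuinely involves $J_k$, so that (\ref{FIJ}) is a bona fide third-order equation) its solution space $M$ is also three-dimensional. For an equidimensional action, possessing an open orbit is equivalent to the existence of a single solution $f\in M$ whose stabiliser in $G_k$ is discrete; infinitesimally, this means that the three generators of $\mathfrak g_k$, namely $\partial_\tau$, $2\tau\partial_\tau-kf\partial_f$ and $\tau^2\partial_\tau-k\tau f\partial_f$, act at $f$ as linearly independent tangent vectors to $M$. I would therefore first characterise the exceptional solutions whose infinitesimal stabiliser is nontrivial, and then argue that they form a proper subvariety of $M$.

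For the characterisation, note that a general element
\[
v=(\alpha+2\beta\tau+\gamma\tau^2)\,\partial_\tau-k(\beta+\gamma\tau)f\,\partial_f\in\mathfrak g_k
\]
lies in the Lie algebra of the stabiliser of $f$ exactly when its flow preserves the graph of $f$, i.e.\ when $v$ is tangent to that graph. This tangency condition is the linear first-order ODE
\[
(\alpha+2\beta\tau+\gamma\tau^2)\,f'+k(\beta+\gamma\tau)\,f=0,
\]
whose general solution is $f=C(\alpha+2\beta\tau+\gamma\tau^2)^{-k/2}$. Hence the solutions admitting a nontrivial infinitesimal symmetry from $\mathfrak g_k$ are precisely the fractional powers of quadratics in $\tau$.

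It remains to separate these exceptional solutions from the generic ones. A direct substitution into (\ref{inv}) shows that for $f=C(\alpha+2\beta\tau+\gamma\tau^2)^{-k/2}$ the invariant $I_k$ is the constant $k^2(\beta^2-\alpha\gamma)$, and then the relation (\ref{diff}) forces $J_k=0$; thus every exceptional solution lies over one of the finitely many points $(I_k,J_k)=(c,0)$ of the curve $C\colon F(I_k,J_k)=0$ with $F(c,0)=0$. Conversely, along a generic solution of (\ref{FIJ}) one has $J_k=\phi(I_k)\not\equiv0$, so (\ref{diff}) gives $dI_k\ne0$ and $I_k$ is non-constant; such a solution is therefore not of the exceptional form, has discrete stabiliser, and so its $G_k$-orbit is three-dimensional, hence open in $M$. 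This establishes local transitivity.

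The main obstacle is precisely the last separation step: one must rule out the possibility that the exceptional (quadratic) solutions fill up $M$ rather than forming a lower-dimensional locus. This is where the hypothesis that (\ref{FIJ}) is genuinely third order ($\partial F/\partial J\not\equiv0$) is essential, since it guarantees that the degenerate branch $J_k=0$ — on which \emph{every} solution is exceptional — is not all of $M$. The relation (\ref{diff}) is the decisive tool here, as it converts the algebraic condition $J_k=0$ into the geometric statement that $I_k$ is constant, thereby cleanly distinguishing the exceptional solutions from the generic ones.
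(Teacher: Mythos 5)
Your argument takes a genuinely different route from the paper's. The paper writes (\ref{FIJ}) as $f_{\tau\tau\tau}=r(f,f_\tau,f_{\tau\tau})$, projects the three generators of $\mathfrak g_k$ onto the three-dimensional solution space, and computes the determinant of the resulting $3\times3$ matrix, arguing that this determinant cannot vanish identically; you instead solve the tangency (stabiliser) condition explicitly and obtain the exceptional solutions $f=C(\alpha+2\beta\tau+\gamma\tau^2)^{-k/2}$. The two approaches are dual: the paper's determinant vanishes at a solution precisely when that solution carries a nontrivial infinitesimal stabiliser, i.e.\ precisely on your exceptional locus, and your chain of equivalences ``exceptional $\Leftrightarrow$ $I_k$ constant $\Leftrightarrow$ $J_k\equiv0$'' is correct (your value of $I_k$ should carry an extra factor $C^{-4/k}$, but only its constancy matters). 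Up to this point your proof is sound, and in fact more explicit than the paper's.

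The genuine gap is in your last step. You claim that the hypothesis $\partial F/\partial J_k\not\equiv0$ guarantees that the degenerate branch $J_k=0$ is not all of $M$. It does not: take $F(I_k,J_k)=J_k$, for which $\partial F/\partial J_k=1$. Equation (\ref{FIJ}) is then exactly $J_k=0$, a bona fide third-order ODE (solve the numerator of $J_k$ for $f'''$), and its general solution is precisely your three-parameter exceptional family $f=C(\alpha+2\beta\tau+\gamma\tau^2)^{-k/2}$. Every solution then has a one-dimensional infinitesimal stabiliser (spanned by the element of $\mathfrak g_k$ whose quadratic is proportional to $\alpha+2\beta\tau+\gamma\tau^2$), every orbit is two-dimensional, and there is no open orbit; the theorem itself fails for this $F$, so no argument can repair the separation step without excluding this case. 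The hypothesis you actually need---and which you tacitly invoked when asserting that the exceptional solutions lie over ``finitely many'' points $(c,0)$---is that $F(I_k,0)\not\equiv0$, i.e.\ that the curve $F=0$ is not contained in the line $J_k=0$. With that assumption your dimension count does close the proof: finitely many admissible constants $c$, an at most two-parameter family of exceptional solutions over each, hence a proper exceptional locus inside the three-dimensional $M$, hence a solution with discrete stabiliser and an open orbit. (For the ODEs actually arising from modular forms this nondegeneracy is automatic, since $J_k\not\equiv0$ for a modular form. Note also that the paper's own concluding step---the determinant ``is not $G_k$-invariant, hence cannot be a left-hand side of the equation''---is fragile at exactly the same spot: that determinant is, up to a nonzero relative-invariant factor, the numerator of $J_k$, so its vanishing \emph{is} an invariant condition, namely $J_k=0$.)
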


\begin{proof}
Let us write equation~(\ref{FIJ}) in the form $f_{\tau\tau\tau}=r(f,f_\tau,f_{\tau\tau})$.
The projections of the evolutionary forms%
\footnote{Every vector field on a jet space with coordinates~$(\tau,f)$ can be prolonged uniquely to the vector field on an infinite jet space
with coordinates $(\tau,f,f_\tau,f_{\tau\tau},\dots)$, which has an equivalent evolutionary vector field
$\sum_{i=0}^\infty \mathrm D_{\frac{\mathrm d^if}{\mathrm d\tau^i}}\chi\ \p_{\frac{\mathrm d^if}{\mathrm d\tau^i}}$
where $\mathrm D$ is the total derivative, see~\cite[Chapter~5]{Olver1993} for rigorous details.}
of the three vector fields spanning~$\mathfrak g_k$ to its three-dimensional solution space,
\begin{gather*}
f_\tau\p_f+f_{\tau\tau}\p_{f_\tau}+r\p_{f_{\tau\tau}},\quad
(kf+2\tau f_\tau)\p_f+((k+2)f_\tau+2\tau f_{\tau\tau})\p_{f_\tau}+((k+4)f_{\tau\tau}+2\tau r)\p_{f_{\tau\tau}},\\
(\tau^2f_\tau+k\tau f)\p_f+((k+2)\tau f_\tau+k f+\tau^2 f_{\tau\tau})\p_{f_\tau}+((k+4)\tau f_{\tau\tau}+2(k+1) f+\tau^2 r)\p_{f_{\tau\tau}},
\end{gather*}
are linearly independent. Indeed, the determinant
\begin{gather*}
\left|\begin{array}{ccc}
f_\tau        & f_{\tau\tau}               & r\\
kf+2\tau f_\tau   & (k+2)f_\tau+2\tau f_{\tau\tau}      & (k+4)f_{\tau\tau}+2\tau r\\
\tau^2f_\tau+k\tau f & (k+2)\tau f_\tau+k f+\tau^2 f_{\tau\tau} & (k+4)\tau f_{\tau\tau}+2(k+1) f_\tau+\tau^2 r
\end{array}\right|\\
=k^2ff_\tau f_{\tau\tau\tau}-k((k+4)f_\tau^2+2(k+1)f^2)f_{\tau\tau}+2(k+1)(k+2)ff_\tau^2.
\end{gather*}
is not $G_k$-invariant, and thus it can not be a left-hand side of the equation satisfied by~$f$.
\end{proof}

Consider also the group~$\mathcal G_k$, which is isomorphic to~$\mathrm{GL}(2,\mathbb R)$, of point transformations on a space with coordinates $(\tau,f)$,
\[
\tilde \tau=\frac{a\tau+b}{c\tau+d}, \quad \tilde f=(c\tau+d)^k f,\quad \text{where}\ \begin{pmatrix} a & b \\ c & d \end{pmatrix}\in\mathrm{GL}(2,\mathbb R).
\]
Its lowest-order differential invariants are
\begin{equation*}\label{inv1}
\begin{array}{c}
\displaystyle P_k=\frac{(k^2f^2f'''-3k(k+2)ff' f''+2(k+1)(k+2)f'^3)^2}
{(kff''-(k+1)f'^2)^3}=\frac{(k+1)[f,\, [f,f]_2]_1^2}{[f,f]_2^{3}},\\
\ \\
\displaystyle Q_k=\frac{f^2\left(k(k{+}1)ff''''-4(k{+}1)(k{+}3)f' f'''+3(k{+}2)(k{+}3)f''^2\right)}{(kff''-(k+1)f'^2)^2}
=\frac{12(k+1)^2f^2[f,f]_4}{(k+2)(k+3)[f,f]_2^2}.
\end{array}
\end{equation*}
Any fourth-order $\mathcal G_k$-invariant ODE can be written in terms of these two invariants as
\begin{equation}\label{FPQ}
\mathcal F(P_k, Q_k)=0.
\end{equation}
Note that the Lie algebra associated with the group~$\mathcal G_k$ is spanned by the vector fields $\p_\tau$, $\tau\p_\tau$, $\tau^2\p_\tau-k\tau f\p_f$ and~$f\p_f$.
Similarly to the proof of Theorem \ref{T1}, one can show that the $\mathcal G_k$-action on the solution space of equation~(\ref{FPQ}) is locally transitive (possesses an open orbit).

There exists a simple link between the ODEs described above. Namely, every $G_k$-invariant third-order ODEs (\ref{FIJ}) possesses, as its differential consequence, a $\mathcal G_k$-invariant fourth-order ODE (\ref{FPQ}); furthermore, every $\mathcal G_k$-invariant fourth-order ODE (\ref{FPQ}) arises in this way. These results are summarised in the two propositions below.

\begin{prop}\label{3to4}
Every $G_k$-invariant third-order ODE possesses, as a differential consequence, a $\mathcal G_k$-invariant fourth-order ODE.
\end{prop}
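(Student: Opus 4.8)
The plan is to build the fourth-order equation by differentiating the third-order equation once with a suitable $G_k$-invariant derivation and then recombining so as to restore the full $\mathcal G_k$-invariance that the scaling $f\mapsto\lambda f$ destroys at third order. The starting observation is that $\mathcal G_k$ is generated by $G_k=\mathrm{SL}(2,\mathbb R)$ together with this scaling, under which $I_k\mapsto\lambda^{-4/k}I_k$ and $J_k\mapsto\lambda^{-6/k}J_k$, so that the combination $J_k^2/I_k^3$ — which is exactly the invariant $P_k$ introduced above — is already scaling-invariant and hence $\mathcal G_k$-invariant, whereas $I_k$ and $J_k$ separately are not.

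First I would introduce the invariant derivation. The $1$-form $f^{2/k}\,\mathrm d\tau$ is $G_k$-invariant, since $\widetilde f^{\,2/k}=(c\tau+d)^2 f^{2/k}$ cancels the factor $(c\tau+d)^{-2}$ in $\mathrm d\widetilde\tau$; consequently the operator
\[
\mathcal D=\frac{k}{2\pi i\,f^{2/k}}\,\frac{\mathrm d}{\mathrm d\tau}
\]
maps $G_k$-invariants to $G_k$-invariants. Relation~(\ref{diff}) then reads precisely $\mathcal D I_k=J_k$, and $K_k:=\mathcal D J_k$ is a $G_k$-invariant of order $4$. By the Lie--Tresse theorem the field of $G_k$-invariants of order at most $4$ is generated by $I_k,J_k,K_k$ (three functionally independent invariants, matching $6-3$ for the three-dimensional group acting on the six jet coordinates $\tau,f,f',f'',f''',f''''$).

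The second ingredient is a structural identity for $Q_k$. Being a $G_k$-invariant of order $4$, $Q_k$ is a function of $I_k,J_k,K_k$; and since it is \emph{affine} in $f''''$ — as is $K_k$, while $I_k,J_k$ do not involve $f''''$ — it must take the form
\[
Q_k=\alpha(I_k,J_k)\,K_k+\beta(I_k,J_k),
\]
the homogeneities of $\alpha,\beta$ being fixed by the scaling-invariance of $Q_k$. With these two facts the argument closes quickly. Applying $\mathcal D$ to~(\ref{FIJ}) gives $F_{I_k}J_k+F_{J_k}K_k=0$, so on the solution manifold $K_k=-(F_{I_k}/F_{J_k})J_k$; substituting into the identity above expresses $Q_k=\Phi(I_k,J_k)$ as a function of $I_k,J_k$ alone along solutions. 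Since solutions also satisfy the single scalar relation $F(I_k,J_k)=0$, both $P_k=J_k^2/I_k^3$ and $Q_k=\Phi$ reduce to functions of one parameter on the curve $\{F=0\}$; eliminating that parameter yields one relation $\mathcal F(P_k,Q_k)=0$. By construction this is $\mathcal G_k$-invariant, is a differential consequence of~(\ref{FIJ}), and involves $f''''$ through $Q_k$, hence is genuinely of fourth order.

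The main obstacle is the structural identity $Q_k=\alpha K_k+\beta$: one must verify that $Q_k$ really coincides, modulo a scaling-invariant factor and lower-order invariants, with the invariant derivative $\mathcal D J_k$, and ideally produce the explicit $\alpha,\beta$. This is cleanest in the Rankin--Cohen form, where both $Q_k$ and $K_k$ are built from $[f,f]_4$ and $[f,[f,f]_2]_1$, but it still requires a careful comparison of jet expressions. A secondary point is non-degeneracy: one needs $F_{J_k}\neq0$ generically for the division to be valid, and one must confirm that the eliminated relation genuinely contains $Q_k$, so that the consequence is honestly of order four rather than collapsing to a lower-order $\mathcal G_k$-invariant equation in $P_k$ only.
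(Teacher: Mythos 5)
Your proof is correct and follows the same skeleton as the paper's: differentiate $F(I_k,J_k)=0$ once along the invariant derivation attached to the $G_k$-invariant form $f^{2/k}\mathrm d\tau$, convert the resulting fourth-order invariant into $Q_k$, and eliminate the remaining non-scaling-invariant quantity along the curve $\{F=0\}$. Where you differ is in the key lemma. The paper proceeds via the explicitly computed (``verified by direct calculation'') formulas
\[
I_k'=\frac{\sqrt{P_k}}{k}\,I_k^{3/2}f^{2/k},\qquad
J_k'=\frac{k^2Q_k-6(k+2)^2}{k(k+1)}\,I_k^{2}f^{2/k},
\]
the first being your $\mathcal D I_k=J_k$, i.e.\ relation~(\ref{diff}), rewritten via $J_k=\sqrt{P_k}I_k^{3/2}$, and the second being your structural identity with the explicit values $\alpha=(k+1)/(k^2I_k^2)$, $\beta=6(k+2)^2/k^2$; it then eliminates $I_k$ from the pair~(\ref{34}). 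You instead establish the structural identity $Q_k=\alpha(I_k,J_k)K_k+\beta(I_k,J_k)$ abstractly: $I_k,J_k,K_k$ are the $6-3=3$ functionally independent invariants on the $4$-jet, and affineness of both $Q_k$ and $K_k$ in $f''''$ forces $Q_k$ to be affine in $K_k$ over functions of $(I_k,J_k)$. That argument is sound, so the ``main obstacle'' you flag at the end is not actually an obstacle: the proposition needs only the \emph{existence} of $\alpha,\beta$, since on solutions $K_k=-(F_{I_k}/F_{J_k})J_k$ already turns $Q_k$ into a function of $(I_k,J_k)$, and the elimination along $\{F=0\}$ goes through verbatim. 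What your abstract route buys is freedom from jet computations; what the paper's computation buys is the explicit constants, which it reuses downstream (in~(\ref{map}), Proposition~\ref{4to3}, and Remarks~\ref{SpecificCubics} and~\ref{F(P,Q)}). Two minor points: scaling invariance fixes only the weights of $\alpha$ and $\beta$, which a priori could still depend on $P_k$ (harmless for your argument, as it never needs their explicit form); and your two genericity caveats, $F_{J_k}\neq0$ and the consequence genuinely containing $Q_k$, are equally passed over in the paper's proof --- for the second, note that if $P_k$ is constant, say equal to $c$, on $\{F=0\}$, then the curve is $J_k^2=cI_k^3$ and the consequence becomes $Q_k=\mathrm{const}$, which is still of genuine fourth order.
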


\begin{proof}

Consider a third-order $G_k$-invariant third-order ODE of type~(\ref{FIJ}). Let us differentiate it using the formulas
\[
I_k'=\frac{\sqrt{P_k}}{k}I_k^{\frac32}f^{\frac2k},\qquad
J_k'=\frac{k^2Q_k-6(k+2)^2}{k(k+1)}I_k^2 f^{\frac2k},
\]
which can be verified by direct calculation. This gives
\[
0=F_{I_k}I_k'+F_{J_k}J_k'=\frac{I_k^{\frac32}f^{\frac2k}}{k}\left(F_{I_k}\sqrt{P_k}+F_{J_k}\frac{k^2Q_k-6(k+2)^2}{k+1}I_k^{\frac12}\right).
\]
Using the relation $J_k=\sqrt{P_k}I_k^{\frac32}$ and eliminating~$I_k$ from the pair of equations
\begin{equation}\label{34}
F(I_k, J_k)=0, \qquad F_{I_k}\sqrt{P_k}+F_{J_k}\frac{k^2Q_k-6(k+2)^2}{k+1}I_k^{\frac12}=0,
\end{equation}
we obtain the required $\mathcal G_k$-invariant fourth-order ODE of type (\ref{FPQ}).
Note that {\it algebraic} third-order ODEs (\ref{FIJ}) produce {\it algebraic} fourth-order ODEs (\ref{FPQ}).
\end{proof}

Proposition~\ref{3to4} is a direct corollary of a more general group-theoretic result.
\begin{theor}
Let $H$ and~$G$ be groups of point transformations of the space~$(\tau,f)$ with $H$ being a proper subgroup of~$G$.
If $H$ is a symmetry group of an ODE~$\mathcal E$ for~$f(\tau)$, then there is a $G$-invariant differential consequence of~$\mathcal E$.
\end{theor}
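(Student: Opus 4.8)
The plan is to prove the statement by a dimension count in jet space, applied to the prolongations of $\mathcal E$ together with their saturation under $G$. Write $N$ for the order of $\mathcal E$ and regard it as a hypersurface $\mathcal E\subset J^N$ in the jet space with coordinates $(\tau,f,f_\tau,\dots,f^{(N)})$, so that $\dim J^N=N+2$. Since $H$ is a proper subgroup of $G$, at the infinitesimal level $\mathfrak h\subsetneq\mathfrak g$, and $d:=\dim G-\dim H$ is a positive integer. The hypothesis that $H$ is a symmetry group of $\mathcal E$ means that the prolonged action of $H$ leaves $\mathcal E$ invariant; I would first record that the same then holds for every prolongation $\mathcal E^{(m)}\subset J^{N+m}$, the submanifold obtained by adjoining the total derivatives $D_\tau\mathcal E=\dots=D_\tau^m\mathcal E=0$, because prolongation is equivariant with respect to the prolonged point-transformation action.

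Next I would compute dimensions. Assuming $\mathcal E$ is non-degenerate (locally solvable for its top-order derivative $f^{(N)}$), its $m$th prolongation is cut out by the $m+1$ independent relations $f^{(N)}=r,\dots,f^{(N+m)}=D_\tau^m r$, so $\operatorname{codim}\mathcal E^{(m)}=m+1$ and hence $\dim\mathcal E^{(m)}=N+1$ for every $m$ (reflecting that the solutions of $\mathcal E$ form an $N$-parameter family). The object of interest is the $G$-saturation $G\cdot\mathcal E^{(m)}\subset J^{N+m}$, which is $G$-invariant by construction and contains $\mathcal E^{(m)}$.

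The heart of the argument is to bound $\dim(G\cdot\mathcal E^{(m)})$. Because $\mathcal E^{(m)}$ is $H$-invariant, the orbit map $\Psi\colon G\times\mathcal E^{(m)}\to J^{N+m}$, $(g,p)\mapsto g\cdot p$, has generic fibres of dimension at least $\dim H$: for fixed $p$ the entire set $\{(gh^{-1},h\cdot p):h\in H\}$ lies in a single fibre. Consequently $\dim(G\cdot\mathcal E^{(m)})\le\dim G+\dim\mathcal E^{(m)}-\dim H=(N+1)+d$. Taking $m=d$, the ambient dimension $\dim J^{N+d}=N+d+2$ strictly exceeds $N+1+d$, so the closure $\overline{G\cdot\mathcal E^{(d)}}$ is a proper $G$-invariant subvariety of $J^{N+d}$. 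Any function cutting it out is $G$-invariant (the variety is), vanishes on $\mathcal E^{(d)}$ and hence on every prolonged solution of $\mathcal E$, and therefore defines the sought $G$-invariant differential consequence, of order at most $N+d$.

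The step I expect to be delicate is the genericity underlying the two dimension counts: I must ensure that $\mathcal E$ is non-degenerate so that $\operatorname{codim}\mathcal E^{(m)}=m+1$, and that the generic fibre of $\Psi$ has dimension exactly $\dim H$ (equivalently, that $H$ acts on $\mathcal E^{(m)}$ with at most discrete generic stabiliser); both conditions may fail on lower-dimensional loci, so the conclusion is properly read as a statement about generic solutions. Finally I would note that this recovers Proposition~\ref{3to4} as the case $G=\mathcal G_k$, $H=G_k$, where $d=\dim\mathrm{GL}(2,\mathbb R)-\dim\mathrm{SL}(2,\mathbb R)=1$, so that a third-order equation ($N=3$) acquires a fourth-order ($N+d=4$) invariant consequence; the explicit elimination carried out there is one concrete realisation of passing to $\overline{G\cdot\mathcal E^{(1)}}$.
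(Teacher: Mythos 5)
Your proposal is correct in substance, and it takes a genuinely different route from the paper. The paper's proof is invariant-theoretic: since $H$ is a symmetry group, $\mathcal E$ can be written as a relation among the fundamental differential invariant of $H$ (of order $\dim H-1$) and its invariant derivatives; repeated application of the $H$-invariant differentiation operator then produces enough equations to eliminate, via the implicit function theorem, the lower-order invariants, and the surviving relation is rewritten through the invariants of $G$ (which, $H$ being a subgroup of $G$, are themselves functions of the $H$-invariants). Your argument replaces all of this by a geometric dimension count on the $G$-saturation of the prolonged equation manifold: prolong $d=\dim G-\dim H$ times, bound $\dim\bigl(G\cdot\mathcal E^{(d)}\bigr)\le (N+1)+d<N+d+2=\dim J^{N+d}$ using the $H$-twisted fibres of the orbit map, and take the resulting proper $G$-invariant subvariety as the consequence. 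Your route is more conceptual and more flexible: it does not invoke the existence of fundamental invariants and invariant differentiation, it yields the explicit order bound $N+d$ (matching $3+1=4$ in Proposition~\ref{3to4}), and it extends essentially verbatim to systems and PDEs, which the paper only asserts is straightforward. What it gives up is constructivity: the paper's elimination delivers the consequence directly in the form of an algebraic relation among the differential invariants of $G$, which is the form actually used in Proposition~\ref{4to3} and in all the examples, whereas your argument proves existence without producing that representation (to recover it you would still need Lie's theorem that an equation with $G$-invariant solution variety can be expressed via the differential invariants of $G$).

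One step is misstated, though it is repairable. The parenthetical claim that \emph{any} function cutting out a $G$-invariant variety is itself $G$-invariant is false: invariance of a zero locus does not make a defining function invariant. In the generic case this does not hurt you, because then the fibres of $\Psi$ have dimension exactly $\dim H$, the closure $\overline{G\cdot\mathcal E^{(d)}}$ is a hypersurface, and invariance of that hypersurface is precisely what it means for $G$ to be a symmetry group of the scalar ODE cutting it out -- no invariance of the defining function is required. The step genuinely fails, however, when the fibres are larger than $\dim H$, which happens essentially when the full symmetry group $H'$ of $\mathcal E$ inside $G$ satisfies $\dim H'>\dim H$: then the saturation has codimension at least $2$, and an arbitrary function vanishing on it has a non-invariant zero locus. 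This case is handled by rerunning your argument with $H'$ in place of $H$ (and if $H'=G$, then $\mathcal E$ is already its own $G$-invariant consequence). With that repair, and with the genericity caveats you already flag (which the paper's own proof shares), your proof is complete.
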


\begin{proof}
Let the space on which the groups~$H$ and~$G$ are acting be coordinatised by $(\tau,f)$, their dimensions be $m$ and~$n$, respectively, $n>m$,
and the equation~$\mathcal E$ take the form~${F(\tau,f,\dots,f_r)=0}$ where $f_i=\frac{\mathrm d^if}{\mathrm d\tau^i}$.
The order of a lowest-order invariant~$I_{m-1}$ of~$H$ is~$m-1$, and its higher-order invariants can be obtained from~$I_{m-1}$
with the help of the $H$-invariant differentiation operator~$\mathrm D$, $I_{m+k-1}=\mathrm D^kI_{m-1}$, $k\in\mathbb N$.
Since equation~$\mathcal E$ is $H$-invariant it can be written as $\bar F_r(I_{m-1},\dots,I_r)=0$ ($r$ is necessarily greater or equal to~$m-1$).
Differentiating  $n-r$ times the equation $\bar F=0$ with the help of the operator~$\mathrm D$,
we obtain a system of $n-r+1$ equations $\bar F_{r+i}(I_{m-1},\dots,I_{m-1+i})=0$, $i=0,\dots,n-r$, on $n-r+1$ invariants~$I_k$
(we assume that $n>r$, otherwise we would have to keep differentiating).
Using the implicit function theorem, the lower-order $n-r$ invariants can be excluded, which results in a single equation containing the remaining invariants~$I_k$,
but they can be rewritten in terms of the invariants $J_l$ of~$G$.
Indeed, since $H<G$, the invariants~$J_l$ of~$G$ are invariants of~$H$ as well, and thus $J_l$ are functions of~$I_k$,
$J_i=f_i(I_{m-1},\dots,I_i)$, $i\geqslant n-1$.
Thus we have a desired $G$-invariant differential consequence of~$\mathcal E$.
\end{proof}

The generalisation of this result to systems of PDEs is straightforward.

\begin{prop}\label{4to3}
Every $\mathcal G_k$-invariant fourth-order ODE is a differential consequence of some $G_k$-invariant third-order ODE.
\end{prop}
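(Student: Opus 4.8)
The plan is to reverse the construction of Proposition~\ref{3to4}. A genuine fourth-order equation~(\ref{FPQ}) involves $f''''$, hence $Q_k$, nontrivially, so where $\mathcal F_{Q_k}\neq 0$ the implicit function theorem lets me solve it for the top invariant as $Q_k=G(P_k)$ for a suitable function $G$. The aim is then to manufacture a third-order $G_k$-invariant ODE $F(I_k,J_k)=0$ of type~(\ref{FIJ}) whose differential consequence, formed exactly as in~(\ref{34}), reproduces~(\ref{FPQ}).

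First I would express $P_k$ and $Q_k$ restricted to a curve $F(I_k,J_k)=0$ in terms of $I_k$, $J_k$ and the slope $dJ_k/dI_k$ alone. Here $P_k=J_k^2/I_k^3$ is immediate from the definitions (this is the relation $J_k=\sqrt{P_k}\,I_k^{3/2}$ used in Proposition~\ref{3to4}), while dividing the differentiation formulas for $I_k'$ and $J_k'$ recorded in the proof of Proposition~\ref{3to4}, together with relation~(\ref{diff}) in the form $I_k'=\frac{1}{k}J_kf^{2/k}$, gives along any solution
\[
\frac{dJ_k}{dI_k}=\frac{J_k'}{I_k'}=\frac{(k^2Q_k-6(k+2)^2)\,I_k^2}{(k+1)J_k},
\]
and hence
\[
Q_k=\frac{6(k+2)^2}{k^2}+\frac{(k+1)J_k}{k^2I_k^2}\,\frac{dJ_k}{dI_k}.
\]

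Next I would substitute $P_k=J_k^2/I_k^3$ and this last expression into $Q_k=G(P_k)$. The purely algebraic relation then becomes a first-order ODE for $J_k$ as a function of $I_k$ in the plane of invariants,
\[
\frac{6(k+2)^2}{k^2}+\frac{(k+1)J_k}{k^2I_k^2}\,\frac{dJ_k}{dI_k}=G\!\left(\frac{J_k^2}{I_k^3}\right),
\]
which, wherever the coefficient $\frac{(k+1)J_k}{k^2I_k^2}$ of the derivative is nonzero, solves for $dJ_k/dI_k=\Phi(I_k,J_k)$. By the existence theorem for first-order ODEs this equation has a one-parameter family of integral curves $F(I_k,J_k;c)=0$; I fix any one of them. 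Being a relation between the two $G_k$-invariants $I_k$ and $J_k$, the equation $F(I_k,J_k)=0$ is a $G_k$-invariant third-order ODE of type~(\ref{FIJ}).

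It remains to verify that this third-order ODE admits~(\ref{FPQ}) as a differential consequence. Along any of its solutions with $I_k'\neq 0$ the trajectory $\tau\mapsto(I_k(\tau),J_k(\tau))$ traverses the chosen integral curve, so its slope $dJ_k/dI_k$ coincides with $\Phi(I_k,J_k)$; feeding this back through the two displayed formulas yields $Q_k=G(P_k)$, that is $\mathcal F(P_k,Q_k)=0$ on every solution. This is precisely the elimination~(\ref{34}) read in reverse. The step I expect to be the main obstacle is making these inversions rigorous: one must justify solving~(\ref{FPQ}) for $Q_k$ and then solving for $dJ_k/dI_k$, and confirm that the integrate-then-differentiate round trip returns the given equation rather than a proper algebraic factor or an enlarged solution set — away from the degeneracies $\mathcal F_{Q_k}=0$ and $J_k=0$ the implicit function theorem renders each step reversible. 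Finally, note that, in contrast to the passage from third to fourth order, integrating a first-order ODE need not preserve algebraicity, so the recovered third-order ODE may be transcendental; this is consistent with the statement, which asserts only the existence of \emph{some} $G_k$-invariant third-order ODE.
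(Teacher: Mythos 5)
Your proposal is correct and follows essentially the same route as the paper: the paper seeks the third-order ODE in the form $J_k-S(I_k)=0$, rewrites $P_k$ and $Q_k$ via~(\ref{map}) as $P_k=I_k^{-3}S^2$ and $Q_k=6\frac{(k+2)^2}{k^2}+\frac{k+1}{k^2}I_k^{-2}S\,\frac{\mathrm dS}{\mathrm dI_k}$ (exactly your slope relations), and substitutes into $\mathcal F(P_k,Q_k)=0$ to obtain a first-order ODE for $S$, whose one-parameter family of solutions gives the desired third-order ODEs. Your version merely makes explicit (via the implicit function theorem and the round-trip verification) steps the paper leaves implicit.
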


\begin{proof}

Given a~$\mathcal G_k$-invariant fourth-order ODE, let us seek a third-order ODE (\ref{FIJ}) in the form $F(I_k,J_k)=J_k-S(I_k)=0$ where $S(I_k)$ is a function to be determined. Using the relation $J_k=\sqrt{P_k}I_k^{\frac32}$, the corresponding equations (\ref{34}) can be written as
\begin{equation}\label{map}
P_k=I_k^{-3}S^2(I_k), \qquad Q_k=6\frac{(k+2)^2}{k^2}+\frac{k+1}{k^2}I_k^{-2}S(I_k)\frac{\mathrm dS(I_k)}{\mathrm dI_k}.
\end{equation}
The substitution of these relations into the fourth-order ODE, $\mathcal F(P_k, Q_k)=0$, gives a first-order differential equation for $S(I_k)$ (whose general solution depends on one arbitrary constant). Thus, there is a one-parameter family of third-order ODEs with the required property.
\end{proof}

\begin{cor}
Every modular form of weight~$k$ satisfies a $\mathrm{GL}(2,\mathbb R)$-invariant fourth-order ODE.
\end{cor}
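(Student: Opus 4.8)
The plan is to derive the corollary directly from Proposition~\ref{3to4}, combined with the classical algebraic-dependence fact recalled in the Introduction. Concretely, I would proceed in three steps: first, exhibit a $G_k$-invariant third-order ODE of the form~(\ref{FIJ}) satisfied by an arbitrary modular form $f$ of weight $k$; second, invoke Proposition~\ref{3to4} to produce a $\mathcal G_k$-invariant fourth-order differential consequence of type~(\ref{FPQ}); and third, observe that any differential consequence of an ODE satisfied by $f$ is itself satisfied by $f$. Since $\mathcal G_k\cong \mathrm{GL}(2,\mathbb R)$, this yields exactly the asserted $\mathrm{GL}(2,\mathbb R)$-invariant fourth-order ODE.

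The cleanest route to the first step is to work directly with the invariants rather than with $f,f',f'',f'''$ separately. Evaluated on a modular form $f$ for $\Gamma\leqslant\mathrm{SL}(2,\mathbb Z)$, the quantities $I_k$ and $J_k$ defined in~(\ref{inv}) are weight-zero $\Gamma$-invariant functions, hence meromorphic functions on the compact Riemann surface $X=\overline{\mathcal H/\Gamma}$. Following the Hurwitz--Resnikoff principle, any two meromorphic functions on a compact Riemann surface are algebraically dependent, so there exists a nonzero polynomial $F$ with $F(I_k,J_k)=0$. Because $I_k$ and $J_k$ are, by~(\ref{inv}), differential invariants of the $G_k$-action of orders two and three respectively, this relation is \emph{manifestly} a $G_k$-invariant third-order ODE of the form~(\ref{FIJ}); no separate verification of $\mathrm{SL}(2,\mathbb R)$-invariance is then required.

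Given such an ODE~(\ref{FIJ}), the remaining steps are immediate. Proposition~\ref{3to4} supplies, via the elimination of $I_k$ from the pair~(\ref{34}), a $\mathcal G_k$-invariant fourth-order ODE~(\ref{FPQ}) that is a differential consequence of~(\ref{FIJ}). As $f$ satisfies~(\ref{FIJ}) identically in $\tau$, it satisfies every differential consequence of~(\ref{FIJ}) as well, and in particular the fourth-order equation $\mathcal F(P_k,Q_k)=0$. Identifying $\mathcal G_k$ with $\mathrm{GL}(2,\mathbb R)$ completes the argument.

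I expect the main obstacle to be non-degeneracy rather than the structural logic. One must ensure that $f$ is a non-constant modular form so that $I_k$ and $J_k$ are genuinely non-constant modular functions with well-defined (not identically vanishing) denominators $f$ and $[f,f]_2$, making the invariants and the curve $C\colon F(I_k,J_k)=0$ meaningful. A second delicate point is that the elimination in Proposition~\ref{3to4} presupposes that $F$ depends nontrivially on both arguments; in the exceptional situation where the algebraic relation reduces to one involving a single invariant (so that $C$ degenerates), one would verify separately that a fourth-order $\mathcal G_k$-invariant consequence still exists, for instance by differentiating the relation once more using the formulas for $I_k'$ and $J_k'$ recorded in the proof of Proposition~\ref{3to4}. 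These edge cases aside, the corollary follows by direct composition of the two results above.
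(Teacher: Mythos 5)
Your overall route coincides with the paper's intended argument: the corollary is obtained by composing the classical fact that every weight-$k$ modular form satisfies a $G_k$-invariant third-order ODE of the form~(\ref{FIJ}) with Proposition~\ref{3to4}, which supplies a $\mathcal G_k$-invariant fourth-order differential consequence of any such equation; since $f$ solves~(\ref{FIJ}), it solves the consequence, and $\mathcal G_k\cong\mathrm{GL}(2,\mathbb R)$. The paper offers nothing beyond this composition, so structurally your proof is the right one, and your caveats about non-constant $f$ and about $F$ degenerating to a relation in a single invariant are sensible.

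There is, however, one concrete inaccuracy in your justification of the first step. You assert that $I_k$ and $J_k$, evaluated on a weight-$k$ form, are weight-zero $\Gamma$-invariant functions and hence meromorphic on $X=\overline{\mathcal H/\Gamma}$, so that the Hurwitz--Resnikoff algebraic-dependence theorem applies to them directly. This is false in general, and the paper says so explicitly in Remark~\ref{SpecificCubics}: the denominators $f^{2+4/k}$ and $f^{3+6/k}$ in~(\ref{inv}) involve fractional powers of $f$ unless $k$ divides $4$ and $6$, so $I_k$ and $J_k$ need not even be single-valued on $\mathcal H$, let alone $\Gamma$-invariant. For instance, for $f=E_4$ ($k=4$) the invariant $J_4$ involves $E_4^{1/2}$, which branches at the zero $\rho$ of $E_4$; only for $k=1,2$ are $I_k,J_k$ honest modular functions. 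The repair is short: $I_k^k$ and $J_k^k$ have the integer powers $f^{2k+4}$ and $f^{3k+6}$ in their denominators and are ratios of modular forms of equal weight (their numerators being powers of the Rankin--Cohen brackets $[f,f]_2$ and $[f,[f,f]_2]_1$), hence genuine modular functions, meromorphic on $X$ provided $\Gamma$ has finite covolume. Hurwitz's theorem \cite{Hurwitz1889, Resnikoff1966} then gives a polynomial relation $G(I_k^k,J_k^k)=0$, and setting $F(x,y):=G(x^k,y^k)$ yields the required $G_k$-invariant third-order relation $F(I_k,J_k)=0$; alternatively, one can simply quote Resnikoff's algebraic dependence of $f,f',f'',f'''$ together with the paper's observation that any such relation can be rewritten in the form~(\ref{FIJ}). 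With this correction the rest of your argument goes through verbatim.
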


\begin{remark}\label{SpecificCubics}  For a modular form $f(\tau)$ of weight $k$, the corresponding $G_k$-invariant third-order ODE (\ref{FIJ}) is necessarily algebraic, thus, there is a (singular) plane algebraic curve $C: F(I_k, J_k)=0$ associated to every modular form. Formulae (\ref{inv}) provide a local parametrisation of $C$. Note that this parametrisation is not necessarily modular since the denominators in~(\ref{inv}) are not modular forms in general.
In the particularly interesting case where~$f$ is a modular form of weight~$k=2$ on a suitable congruence subgroup $\Gamma_0(N)$,
the functions~$I_2$ and~$J_2$ become modular,
\begin{gather*}
\displaystyle I_2=\frac{2 ff''-3f'^2}{f^{4}},
\quad
\displaystyle J_2=\frac{4f^2f'''-24ff'f''+24f'^3}
{f^{6}},
\end{gather*}
and formula (\ref{diff}) reduces to
\[
\frac{\mathrm dI_2}{J_2}=\pi i f(\tau) \mathrm d\tau.
\]
It shows that, up to a constant factor, $f(\tau)\mathrm d\tau$ is a pull-back of the holomorphic differential on~$C$
(see \cite[Section 10.4]{Brezhnev08}, for an alternative derivation of a third-order ODE for modular forms of weight $k=2$).

For several examples of modular forms of weight~$k$ discussed in Section~\ref{sec:ex}, the curve $C$ is a nodal cubic,
\[
F(I_k, J_k)=J_k^2+aI_k^3+bI_k^2=0,
\]
where $a, b \in \mathbb{Q}$ are some constants (that depend on $k$). In view of~(\ref{34}), the corresponding $\mathcal G_k$-invariant fourth-order ODE takes the form
\[
{\cal F}(P_k,Q_k)=2k^2Q_k-2(k+1)P_k+(k+1)a-12(k+2)^2=0,
\]
which is a linear relation between $P_k$ and $Q_k$.
\end{remark}

\begin{remark}\label{F(P,Q)}
For every modular form $f(\tau)$ of weight~$k$, the corresponding ${\cal{G}}_k$-invariant fourth-order ODE (\ref{FPQ}) is necessarily algebraic. In other words, there is another (singular) plane algebraic curve ${\cal C}\colon {{\cal F}(P_k, Q_k)=0}$ associated to every modular form,
furthermore, formulae (\ref{inv}) provide a modular parametrisation of $\cal C$ (note that  $P_k$ and $Q_k$ are modular functions for every~$k$).
There is a natural covering map $C\to {\cal C}$ defined as
\[
P_k=\frac{J_k^2}{I_k^3}, \qquad Q_k=6\frac{(k+2)^2}{k^2}-\frac{k+1}{k^2}\frac{J_kF_{I_k}}{I_k^2F_{J_k}};
\]
use the equation $J_k=\sqrt{P_k}I_k^{\frac32}$ and the second equation (\ref{34}). We emphasise that in all examples discussed so far, the curve $\cal C$ has been rational. Although the rationality of $\cal C$ clearly holds for modular forms $f(\tau)$ on genus zero congruence subgroups $\Gamma_0(N)$
(in which case both $P_k$ and $Q_k$ become rational functions of the corresponding Hauptmodul), it also holds for congruence subgroups of higher genus; see Examples \ref{ExamCong11}--\ref{ExamCong37} of Section \ref{sec:ex} where we derived differential equations for newforms on genus $1$ congruence subgroups $\Gamma_0(11), \Gamma_0(14), \Gamma_0(15)$,  $\Gamma_0(17)$, and $\Gamma_0(37)$.
\end{remark}


\subsection{Examples}
\label{sec:ex}

First of all, we would like to revamp some classical results on the forms~$E_4$, $E_6$, $\Delta$ and~$\theta$.

\newcounter{tbn}\setcounter{tbn}{0}
\noindent\textbf{Example~\refstepcounter{tbn}\thetbn\label{ExamE4E6}: Eisenstein series $E_4$ and $E_6$.}
It has already been mentioned that the Eisenstein series $E_2$, $E_4$ and~$E_6$ satisfy the Ramanujan system~(\ref{Raman}).
Eliminating consequently~$E_4$ and~$E_6$ we arrive at the Chazy equation for~$E_2$,
\[
2E_2'''-2E_2E_2''+3(E_2')^2=0.
\]
Analogously, we can arrive at the equation for~$E_4$, which has appeared earlier  in~\cite[Proposition~4]{Resnikoff1966},
and for~$E_6$, a calculation  which was too `distasteful' for Resnikoff to perform (without the use of  computer), see~\cite[p.~344]{Resnikoff1966}.
The Eisenstein series~$E_4$ satisfies the third-order ODE
\begin{equation}
\begin{aligned}\label{E4}
80E_4^2E_4'''^2+&120(5E_4'^3-6E_4E_4'E_4'')E_4'''+576E_4E_4''^3-20(27E_4'^2+4E_4^3)E_4''^2\\
&+200E_4^2E_4'^2E_4''-125E_4E_4'^4=0,
\end{aligned}
\end{equation}
which takes a nice form
\[
5J_4^2+144I_4^3-80I_4^2=0
\]
in terms of the invariants of the group~$G_4$. In its turn, the $\mathcal G_4$-invariant equation for~$E_4$ is
\[
16Q_4-5P_4-144=0.
\]
The explicit ODEs for~$E_6$ are indeed quite long so that we  only present their invariant forms,
\begin{gather*}
343(J_6^3-216I_6^3)+2(256I_6^3+7J_6^2)^2=0,
\end{gather*}
and
\[
(6Q_6-32)^2-7(Q_6-4)P_6=0.
\]

\noindent\textbf{Example~\refstepcounter{tbn}\thetbn\label{ExamDelta}: Modular discriminant $\Delta$.} The modular discriminant~$\Delta$ is a cusp form of weight~12 defined by the formula
\[
\Delta=q\prod_{n=1}^{\infty}(1-q^n)^{24}.
\]
It can be expressed as $\Delta=\frac1{1728}(E_4^3-E_6^2)$, and thus we can construct a $G_{12}$-invariant third-order ODE satisfied by~$\Delta$, which is
\begin{equation}\label{Delta3}
\begin{array}{c}
36\Delta^4\Delta'''^2-14(18\Delta \Delta''-13\Delta'^2)\Delta^2\Delta'\Delta'''+48\Delta^3\Delta''^3\\
\qquad+285\Delta^2\Delta'^2\Delta''^2-468\Delta\Delta'^4\Delta''+169\Delta'^6+48\Delta^7=0.
\end{array}
\end{equation}
Its invariant form defines an elliptic curve (equianharmonic case),
\[
J_{12}^2+16I_{12}^3+27648=0.
\]
Equation (\ref{Delta3}) was first found in~\cite{Resnikoff1966}.
On the other hand, $\Delta$ is known~\cite{vdP, Rankin}
to satisfy the $\mathcal G_{12}$-invariant fourth-order ODE
\begin{gather}\label{Delta4}
2\Delta^3\Delta''''-10\Delta^2\Delta'\Delta'''-3\Delta^2\Delta''^2+24\Delta\Delta'^2\Delta''-13\Delta'^4=0,
\end{gather}
whose invariant form is
\[
Q_{12}=6.
\]
Note that (\ref{Delta4}) is a differential consequence of (\ref{Delta3}).

\noindent\textbf{Example~\refstepcounter{tbn}\thetbn\label{ExamTheta}: Jacobi theta constants.}
Jacobi theta constants (thetanulls) are defined as
\[
 \theta_2=\sum_{n=-\infty}^{\infty}\mathrm e^{(n-1/2)^2\pi i \tau}, \quad
 \theta_3=\sum_{n=-\infty}^{\infty}\mathrm e^{n^2\pi i \tau}, \quad
 \theta_4=\sum_{n=-\infty}^{\infty}(-1)^n\mathrm e^{n^2\pi i \tau}.
\]
They are known to satisfy the same third-order ODE~\cite{Jacobi1848}
\[
(\theta^2\theta_{\tau\tau\tau}-15\theta\theta_\tau\theta_{\tau\tau}+30\theta_\tau^3)^2+32(\theta\theta_{\tau\tau}-3\theta_\tau^2)^3+\pi^2 \theta^{10}(\theta\theta_{\tau\tau}-3\theta_\tau^2)^2=0.
\]
This equation is $G_{1/2}$-invariant, which reflects the fact that the thetanulls are modular forms of weight~$1/2$,
and can be presented as
\[
J_{1/2}^2+16\, I_{1/2}^3-\frac{1}{16}I_{1/2}^2=0.
\]
Jacobi theta constants also satisfy a nonlinear fourth-order ODE with $\mathrm{GL}(2, \mathbb{R})$-symmetry,
\[
\theta^3(\theta\theta_{\tau\tau}-3\theta_\tau^2)\theta_{\tau\tau\tau\tau}-\theta^4\theta_{\tau\tau\tau}^2+2\theta^2\theta_\tau(\theta\theta_{\tau\tau}+12\theta_\tau^2)\theta_{\tau\tau\tau}+\theta^3\theta_{\tau\tau}^3
-24\theta^2\theta_\tau^2\theta_{\tau\tau}^2-18\theta\theta_\tau^4\theta_{\tau\tau}+18\theta_\tau^6=0,
\]
see \cite[eq.~(5.5)]{O3}, whose invariant form is
\[
Q_{1/2}-6P_{1/2}-102=0.
\]

Now we would like to consider some recently discovered systems of ODEs for modular forms on congruence subgroups of~$\mathrm{SL}(2,\mathbb Z)$.
Thus, there are Ramanujan-like systems for modular forms of level two~\cite{Ramamani1917}, three~\cite{Huber2011,Matsuda2020,O1}, five~\cite{Mano2002,Matsuda2019}, six~\cite{Matsuda2022}, etc.
(See many more results on various relations between modular forms on congruence subgroups of~$\mathrm{SL}(2,\mathbb Z)$ in~\cite{Cooper2017}.)

\noindent\textbf{Example~\refstepcounter{tbn}\thetbn\label{ExamCongTwo}: Modular forms on~$\Gamma_0(2)$.} Ramamani~\cite{Ramamani1917} found the analogue
\begin{gather}
\mathcal P'=\frac{\mathcal P^2-\mathcal Q}4,\quad
\tilde{\mathcal P}'=\frac{\mathcal P\tilde{\mathcal P}-\mathcal Q}2, \quad
\mathcal Q'=(\mathcal P-\tilde{\mathcal P})\mathcal Q,
\end{gather}
of the Ramanujan system for the congruence subgroup~$\Gamma_0(2)$ of~$\mathrm{SL}(2,\mathbb Z)$,
\[
\Gamma_0(2):=\left\{\begin{pmatrix} a & b\\ c & d\end{pmatrix}\in\mathrm{SL}(2,\mathbb Z)\mid c\equiv0\bmod 2\right\}.
\]
Here, $\mathcal P$ and~$\mathcal Q$ are normalized Eisenstein series on~$\Gamma_0(2)$ of weights~2 and~4, respectively,
and $\tilde{\mathcal P}$ is a modular form of weight~2,
\[
\mathcal P(q)=1-8\sum\limits_{n=1}^\infty\frac{(-1)^nnq^n}{1-q^n},\quad
\mathcal Q(q)=1+16\sum\limits_{n=1}^\infty\frac{(-1)^nn^3q^n}{1-q^n},\quad
\tilde{\mathcal P}(q)=1+24\sum\limits_{n=1}^\infty\frac{nq^n}{1+q^n}.
\]
It was shown in~\cite{Ablowitz2006} that~$\tilde{\mathcal P}$ can be presented as $\tilde{\mathcal P}=\frac32\mathcal P-\frac12E_2$ where
$\mathcal P$ satisfies a third-order ODE,
\[
2(\mathcal P^2-4\mathcal P')\mathcal P'''+8(\mathcal P'')^2-2\mathcal P^3\mathcal P''+(3\mathcal P^2-4\mathcal P')(\mathcal P')^2=0,
\]
and the weight-4 modular form $\mathcal D=\frac1{64}(\tilde{\mathcal P}^2-\mathcal Q)$, which can be seen as an analogue of~$\Delta$ for the congruence group~$\Gamma_0(2)$, satisfies a fourth-order ODE,
\begin{gather*}
(8\mathcal D^4\mathcal D''-10\mathcal D^3\mathcal D'^2)\mathcal D''''-8\mathcal D^4\mathcal D'''^2
+(10\mathcal D^2\mathcal D'^3+16\mathcal D^3\mathcal D'\mathcal D'')\mathcal D'''\\
-20\mathcal D^3\mathcal D''^3+39\mathcal D^2\mathcal D'^2\mathcal D''^2
-60\mathcal D\mathcal D'^4\mathcal D''+25\mathcal D'^6=0.
\end{gather*}
We can build on these results by showing additionally that $\tilde{\mathcal P}$, which is a modular form of weight two, satisfies the third-order ODE
\begin{gather*}
3\tilde{\mathcal P}^2\tilde{\mathcal P}'''^2-36(\tilde{\mathcal P}\tilde{\mathcal P}'\tilde{\mathcal P}''-\tilde{\mathcal P}'^3)\tilde{\mathcal P}'''
+32\tilde{\mathcal P}\tilde{\mathcal P}''^3-3(12\tilde{\mathcal P}'^2+\tilde{\mathcal P}^4)\tilde{\mathcal P}''^2
+9\tilde{\mathcal P}^3\tilde{\mathcal P}'^2\tilde{\mathcal P}''-\frac{27}{4}\tilde{\mathcal P}^2\tilde{\mathcal P}'^4=0,\\
\qquad \text{or, in invariant terms,} \quad 3J_2^2+64I_2^3-12I_2^2=0,
\end{gather*}
the weight-4 modular form~$\mathcal Q$ satisfies the third-order ODE
\begin{gather*}
4\mathcal Q^4\mathcal Q'''^2-6\mathcal Q^2\mathcal Q'(6\mathcal Q\mathcal Q''-5\mathcal Q'^2)\mathcal Q'''+16\mathcal Q^3\mathcal Q''^3+
\mathcal Q^2(21\mathcal Q'^2-4\mathcal Q^3)\mathcal Q''^2\\
-10\mathcal Q\mathcal Q'^2(6\mathcal Q'^2-\mathcal Q^3)\mathcal Q''+25\mathcal Q'^6-\frac{25}4\mathcal Q^3\mathcal Q'^4=0,\\
\qquad \text{or, in invariant terms,} \quad J_4^2+16I_4^3-16I_4^2=0,
\end{gather*}
the weight-4 modular form~$\mathcal D$ satisfies the third-order ODE
\begin{gather*}
4\mathcal D^4\mathcal D'''^2-6\mathcal D^2\mathcal D'(6\mathcal D\mathcal D''-5\mathcal D'^2)\mathcal D'''+16\mathcal D^3\mathcal D''^3-
\mathcal D^2(256\mathcal D^3-21\mathcal D'^2)\mathcal D''^2\\
+20\mathcal D\mathcal D'^2(32\mathcal D^3-3\mathcal D'^2)\mathcal D''+25\mathcal D'^6
-400\mathcal D^3\mathcal D'^4=0,\\
\qquad \text{or, in invariant terms,} \quad J_4^2+16I_4^3-1024I_4^2=0,
\end{gather*}
and the weight-8 cusp form $\tilde{\mathcal D}=\mathcal D\mathcal Q$ satisfies the third-order ODE
\begin{gather*}
16\tilde{\mathcal D}^4\tilde{\mathcal D}'''^2-30\tilde{\mathcal D}^2\tilde{\mathcal D}'(4\tilde{\mathcal D}\tilde{\mathcal D}''-3\tilde{\mathcal D}'^2)\tilde{\mathcal D}'''+32\tilde{\mathcal D}^3\tilde{\mathcal D}''^3+117\tilde{\mathcal D}^2\tilde{\mathcal D}'^2\tilde{\mathcal D}''^2-8\tilde{\mathcal D}(16\tilde{\mathcal D}^5+27\tilde{\mathcal D}'^4)\tilde{\mathcal D}''\\
+81\tilde{\mathcal D}'^6+144\tilde{\mathcal D}^5\tilde{\mathcal D}'^2=0,\\
\qquad \text{or, in invariant terms,} \quad J_8^2+16I_8^3-496I_8=0.
\end{gather*}
Note that the last invariant equation defines an elliptic curve (lemniscatic case).
A $\mathcal G_8$-invariant ODE for~$\tilde{\mathcal D}$ is $128Q_8-9P_8-912=0$.
We refer to Remark~\ref{SpecificCubics} of Section \ref{sec:invar} for the fourth-order $\mathcal G_k$-invariant equations for the modular forms $\tilde{\mathcal P}, \mathcal Q, \mathcal D$
(note that $\mathcal P$ is only quasi-modular).

\noindent\textbf{Example~\refstepcounter{tbn}\thetbn\label{ExamE13}: Eisenstein series $E_{1,3}$.} This modular form of weight~$1$ and level~$3$ is defined as
\begin{equation*}
E_{1, 3}(\tau)=\sum_{(\alpha,\beta)\in\Z^2}q^{(\alpha^2-\alpha\beta+\beta^2)}=1+6q+6q^3+6q^4+12q^7+.....
\label{g1}
\end{equation*}
Matsuda \cite{Matsuda2020} presented several systems satisfied by~$E_{1,3}$, one of which was first derived by Huber in~\cite{Huber2011},
\begin{equation*}
E_{1,3}'=\frac{M E_{1,3}-N}3,\quad M'=\frac{M^2-N E_2}3,\quad N'=(M-E_2^2)N,
\end{equation*}
where~$M$ and~$N$ are some functions whose specific forms are irrelevant here.
Eliminating $M$ and $N$, we obtain a $G_1$-invariant third-order equation for~$f=E_{1,3}$,
\begin{gather}\label{E_13Ord3}
f^2f'''^2-6f'(3ff''-4f'^2)f'''+18ff''^3 -(f^6+27f'^2)f''^2 +4f^5f'^2f''-4f^4f'^4=0,
\end{gather}
whose invariant form is
\[
J_1^2+18I_1^3-I_1^2=0.
\]
It was also shown in \cite{FO} that $f=E_{1,3}$ satisfies the fourth-order ODE
\begin{gather}\label{E_13Ord4}
ff''''(ff''-2f'^2)-f^2f'''^2+2f'(ff''+4f'^2)f'''-9f'^2f''^2=0,
\end{gather}
whose invariant form is
\[
Q_1-2P_1-36=0.
\]

Finally, if one does not know a particular ODE or a system thereof satisfied by a given modular form,
it is possible to construct a third-order ODE directly by choosing an appropriate (polynomial in~$I_k$ and~$J_k$) ansatz. This is what we have done in the examples below.

\noindent\textbf{Example~\refstepcounter{tbn}\thetbn\label{ExamCong11}: Newform of weight~2 on $\Gamma_0(11)$.}
There is a unique cusp form of weight~$2$ on the congruence subgroup~$\Gamma_0(11)$,
\[
f(\tau)=q\prod\limits_{n=1}^\infty (1-q^n)^2(1-q^{11n})^2,
\]
labelled as case 11.2.a.a in the LMFDB database~\cite{LMFDB}.
It satisfies a $G_{11}$-invariant equation
\begin{gather*}
J_2^4+32(I_2-8)(I_2^2+72I_2-944)J_2^2+256(I_2+8)(I_2^3+152I_2^2-704I_2+1168)(I_2-8)^2=0,
\end{gather*}
which defines a singular algebraic curve $C$ of genus~$1$ with the $j$-invariant $j=-2^{12}31^311^{-5}$.
Remarkably, this value coincides with the $j$-invariant of the curve $y^2+y=x^3-x^2-10x-20$,
which can be uniformised by the same cusp form~$f$, see~\cite[Table 1, case $11B$]{BK}.
Thus, both curves are birationally equivalent via the map
\[
I_2=-\frac{8x^2+8x-119}{(x-5)^2},\quad J_2=-\frac{44(4x-9)(2y+1)}{(x-5)^3}.
\]
The inverse transformation is
\begin{gather*}
x=\frac{-11J_2^2+16(I_2-8)(9I_2^2-2468I_2+11712)}{64(I_2-83)(I_2^2-64)},\\
y=\frac{(11I_2-264)J_2^3+176(I_2-8)(I_2^3+80I_2^2-5584I_2+43904)J_2}{512(I_2^2-64)^2(I_2-83)}+\frac12.
\end{gather*}
Note that ${\mathrm dI_2}/{J_2}$ is the holomorphic differential on $C$.
By Remark~\ref{SpecificCubics} of Section~\ref{sec:invar}, it equals $\pi i f(\tau) \mathrm d\tau$.
The form $f(\tau)$ also satisfies the fourth-order $\mathcal G_2$-invariant ODE
\begin{gather*}
5616022359375P_2^4-2^43^55^3(34618195Q_2-763426383)P_2^3\\
-2^83^3(173368000Q_2^3-8479136175Q_2^2+183916606320Q_2-1561600055241)P_2^2\\
+2^{12}3^2(64349800Q_2^4{-}3828348951Q_2^3{+}88775864253Q_2^2{-}1000262056761Q_2{+}4759648412715)P_2\\
+131072(4Q_2-63)(5329Q_2^3-204861Q_2^2+2745099Q_2-14039703)(2Q_2-105)^2=0,
\end{gather*}
which defines a singular rational curve~$\cal C$ in the plane $(P_2,Q_2)$.

\noindent\textbf{Example~\refstepcounter{tbn}\thetbn\label{ExamCong14}: Newform of weight~2 on $\Gamma_0(14)$.} A unique cusp form of weight~$2$ on the congruence subgroup~$\Gamma_0(14)$,
labelled as 14.2.a.a in~\cite{LMFDB},
\[
f(\tau)=q\prod\limits_{n=1}^\infty (1-q^n)(1-q^{2n})(1-q^{7n})(1-q^{14n}),
\]
satisfies the following third-order ODE,
\begin{gather*}
J_2^4+32(I_2-5)(I_2^2+49I_2-350)J_2^2+256(I_2+20)(I_2-4)(I_2^2+86I_2-199)(I_2-5)^2=0.
\end{gather*}
The genus of this singular algebraic curve $C$ is~$1$, the $j$-invariant is $j=5^3 11^3 31^3 2^{-3}7^{-6}$ and the holomorphic differential is ${\mathrm dI_2}/{J_2}$.
This curve is birationally equivalent to the curve~\cite[Table~1, case~14D]{BK}, $y^2+(x+1)y=x^3-36x-70$, via the transformation
\[
I_2=\frac{4x^2+18x-41}{(x+4)^2},\quad J_2=-\frac{28(x+11)(2y+x+1)}{(x+4)^3}.
\]
The curve $\cal C$ corresponding to the fourth-order $\mathcal G_2$-invariant ODE for $f(\tau)$ is rational (we do not present it explicitly due to its complexity).

\noindent\textbf{Example~\refstepcounter{tbn}\thetbn\label{ExamCong15}: Newform of weight~2 on $\Gamma_0(15)$.} A unique cusp form of weight~$2$ on the congruence subgroup~$\Gamma_0(15)$,
labelled as 15.2.a.a in~\cite{LMFDB},
\[
f(\tau)=q\prod\limits_{n=1}^\infty (1-q^n)(1-q^{3n})(1-q^{5n})(1-q^{15n}),
\]
satisfies the following third-order ODE,
\begin{gather*}
J_2^4+32(I_2-5)(I_2^2+33I_2-406)J_2^2+256(I_2+76)(I_2+4)(I_2^2-10I_2+89)(I_2-5)^2=0.
\end{gather*}
The genus of this singular algebraic curve~$C$ is~$1$, the $j$-invariant is $j=23^373^33^{-2}5^{-8}$ and the holomorphic differential is ${\mathrm dI_2}/{J_2}$.
This curve is birationally equivalent to the curve~\cite[Table~1, case~15F]{BK}, $y^2+(x+1)y=x^3+x^2+35x-28$, via the transformation
\[
I_2=-\frac{4x^2+74x+61}{(x-2)^2},\quad J_2=-\frac{180(x+3)(2y+x+1)}{(x-2)^3}.
\]
The curve $\cal C$ corresponding to the fourth-order $\mathcal G_2$-invariant ODE for $f(\tau)$ is rational (we do not present it explicitly due to its complexity).

\noindent\textbf{Example~\refstepcounter{tbn}\thetbn\label{ExamCong17}: Newform of weight~2 on $\Gamma_0(17)$.}
A cusp form  of weight~$2$ on the congruence subgroup~$\Gamma_0(17)$ labelled as 17.2.a.a in \cite{LMFDB},
\[
f(\tau)=q-q^{2}-q^{4}-2q^{5}+4q^{7}+3q^{8}-3q^{9}+O(q^{10}),
\]
satisfies the  third-order ODE
\begin{gather*}
9J_2^8{+}96(7I_2^3{+}3I_2^2{-}1734I_2{+}21340)J_2^6\\
{+}256(73I_2^6{+}156I_2^5{-}45153I_2^4{+}734762I_2^3{-}2275710I_2^2{-}12691872I_2{+}175400560)J_2^4\\
{+}8192(28I_2^5{+}993I_2^4{-}11044I_2^3{+}13213I_2^2{+}854526I_2{+}6877196)(I_2^4{-}30I_2^3{+}309I_2^2{-}584I_2{+}5232)J_2^2\\
{+}65536(16I_2^4{+}1132I_2^3{+}13477I_2^2{+}91338I_2{+}212581)(I_2^4{-}30I_2^3{+}309I_2^2{-}584I_2{+}5232)^2=0,
\end{gather*}
which defines a singular algebraic curve $C$ of genus~$1$ with the $j$-invariant $j=-3^3 11^3 17^{-4}$. Once again, this value coincides with the $j$-invariant of the curve $y^2+(x+1)y=x^3-x^2-x-14$, which can be uniformised by the same cusp form, see~\cite[Table 1, case 17C]{BK}. Thus, both curves are birationally equivalent.
The holomorphic differential is again ${\mathrm dI_2}/{J_2}$.

The curve $\cal C$ corresponding to the fourth-order $\mathcal G_2$-invariant ODE for $f(\tau)$ is rational (we do not present it explicitly due to its complexity).

\noindent\textbf{Example~\refstepcounter{tbn}\thetbn\label{ExamCong37}: Newform of weight~2 on $\Gamma_0(37)$.}
A cusp form  of weight~$2$ on the congruence subgroup~$\Gamma_0(37)$ labelled as 37.2.a.b in \cite{LMFDB},
\[
f(\tau)=q+q^{3}-2q^{4}-q^{7}-2q^{9}+O(q^{10}),
\]
satisfies a  third-order ODE $F(I_{37}, J_{37})=0$
which defines a singular algebraic curve $C$ of genus~$1$ with the $j$-invariant $j=2^{12}3^3 37^{-1}$. This value coincides with the $j$-invariant of the curve $y^2-y=x^3-x$, which can be uniformised by the same cusp form, \cite[Table 1, case 37A]{BK}, compare with \cite{Zagier1984}. Thus, both curves are birationally equivalent.
The holomorphic differential is ${\mathrm dI_2}/{J_2}$. Note that in this example the modular curve $\Gamma_0(37)\backslash {\mathcal H}$ has genus two and, according to  \cite{Zagier1985}, the map $\Gamma_0(37)\backslash {\mathcal H}\rightarrow C$ is a two-sheeted covering.

The curve $\cal C$ corresponding to the fourth-order $\mathcal G_2$-invariant ODE for $f(\tau)$ is rational (we do not present it explicitly due to its complexity).

\section{Differential equations for Jacobi forms}
\label{sec:Jac}

\begin{Defi}
A Jacobi form of weight~$k$ and index~$m$ is a holomorphic function $f\colon \mathcal H\times \mathbb C\mapsto \mathbb C$ with the transformation property
\[
\tilde\tau = \frac{a\tau+b}{c\tau+d},\quad \tilde z = \frac{z+\lambda \tau + \mu}{c\tau+d}, \quad
\tilde f = (c\tau+d)^k\mathrm e^{2\pi im\left(\frac{c(z+\lambda\tau+\mu)^2}{c\tau+d}-\lambda^2\tau-2\lambda z-\lambda\mu\right)}f,
\]
where $\tau\in \mathcal H$, $z\in\mathbb C$, $\begin{pmatrix} a & b \\ c & d \end{pmatrix}\in \mathrm{SL}(2,\mathbb Z)$, $\lambda, \mu\in\mathbb Z$,
and such that it has a Fourier expansion of the form
\[
f(\tau,z)=\sum_{\substack{n=0\\r\in\mathbb Z,\ r^2\leqslant 4nm}}^\infty c(n,r) q^n\zeta^r,
\]
where $q=\mathrm e^{2\pi i\tau}$ and $\zeta=\mathrm e^{2\pi iz}$.
If $f$ has a Fourier expansion of the same form but with $r^2<4nm$ then $f$ is called a Jacobi cusp form of weight~$k$ and index~$m$.
If we drop the holomorphicity condition, the function~$f$ is called a weak Jacobi form if $c(n,r)=0$ unless $n\geqslant n_0$ for some possible negative integer~$n_0$
\end{Defi}

Classical examples of Jacobi forms are modular forms, theta series, Fourier coefficients of Siegel modular forms and the Weierstrass $\wp$-function.
See more details in~\cite{EichlerZagier1985}.

Similarly to modular forms, there is a notion of the Rankin--Cohen bracket~\cite{Choie1997} that assigns to Jacobi forms~$f_1$ and~$f_2$ of
weights~$k_1$ and~$k_2$ and indices~$m_1$ and~$m_2$, respectively, a Jacobi form of weight~$k_1+k_2+2n$ and index~$m_1+m_2$,
\[
[[f_1,f_2]]_n:=\sum\limits_{i=0}^n(-1)^i{k_1+n-3/2 \choose n-i}{k_2+n-3/2 \choose i}m_1^{n-i}m_2^i L_{m_1}^i(f_1) L_{m_2}^{n-i}(f_2),
\]
where $L_m:=8\pi i m\, \partial_{\tau}-\partial_z^2$ is the heat operator.
The theory of Jacobi forms features an additional family of Rankin--Cohen operators~\cite{ChoieEholzer1998} parametrised by an arbitrary complex number~$X$
and a non-negative integer~$n$,
\begin{gather*}
[f_1,f_2]_{X,2n}:=\sum\limits_{r+s+p=n} C_{r,s,p}(k_1,k_2)(1+m_1X)^s(1-m_2X)^r
L_{m_1+m_2}^p(L_{m_1}^r(f_1)L_{m_2}^s(f_2)),\\
[f_1,f_2]_{X,2n+1}=m_1[f_1,\p_zf_2]_{X,2n}-m_2[\p_zf_1,f_2]_{X,2n}
\end{gather*}
where
\[
C_{r,s,p}(k_1,k_2):=\frac{(k_1+n-3/2)_{s+p}}{r!}\frac{(k_2+n-3/2)_{r+p}}{s!}\frac{(3/2-k_1-k_2-n)_{r+s}}{p!}
\]
and $(x)_l=\prod\limits_{0\leqslant i\leqslant l-1}(x-i)$.

\subsection{Differential invariants of the Jacobi group}
\label{sec:Jac0}
Consider the six-dimensional group~$G_{k,m}$ of point transformations called the Jacobi group and acting on a space with coordinates~$(\tau,z,f)$,
\[
\tilde\tau = \frac{a\tau+b}{c\tau+d},\quad \tilde z = \frac{z+\lambda \tau + \mu}{c\tau+d}\quad
\tilde f = (c\tau+d)^k\mathrm e^{2\pi im\left(\frac{c(z+\lambda\tau+\mu)^2}{c\tau+d}-\lambda^2\tau-2\lambda z-\lambda\mu+\kappa\right)}f,
\]
where $\begin{pmatrix} a & b \\ c & d \end{pmatrix}\in \mathrm{SL}(2,\mathbb R)$ and $\lambda, \mu, \kappa \in \mathbb{R}$.
In the \textit{generic} case
\[
2kff_{zz}{-}8m\pi iff_\tau{-}(2k{-}1)f_z^2\neq0\quad \text{and}\quad  m\neq0,
\]
its second-order differential invariants are
\begin{align*}
L_{k,m} =& \frac1{(2kff_{zz}{-}8m\pi iff_\tau{-}(2k{-}1)f_z^2)^2}
\Big(64m^2\pi ^2f^3f_{\tau\tau}{+}32m\pi i f^2f_zf_{\tau z}{-}4k(k{+}1)f^2f_{zz}^2\\
&{+}4(8m(k{+}1)\pi iff_\tau{+}(2k^2{+}k{-}2)f_z^2)ff_{zz}{-}(16(2k{+}3)m\pi iff_\tau{+}(4k^2{-}7)f_z^2)f_z^2\Big),\\
M_{k,m} =& \frac{4m\pi if^2f_{\tau z}{-}ff_zf_{zz}{-}(4m\pi iff_\tau{-}f_z^2)f_z}{(2kff_{zz}{-}8m\pi iff_\tau{-}(2k{-}1)f_z^2)^{3/2}},
\end{align*}
and the third-order differential invariants are
\begin{align*}
N_{k,m} =& \frac1{(2kff_{zz}{-}8m\pi iff_\tau {-}(2k{-}1)f_z^2)^3}
\Big(512m^3\pi^3if^5f_{\tau \tau \tau }{-}384m^2\pi^2f^4f_zf_{\tau \tau z}{-}96m\pi if^3f_z^2f_{\tau zz}\\
&{+}8f^2f_z^3f_{zzz}{-}96m \pi f^2(2(k{+}2)ff_{zz}{-}(2k{+}5)f_z^2)(2m\pi ff_{\tau \tau }{+}if_zf_{\tau z})\\
&{-}4(k{+}2)f^2f_{zz}^2\left(24(k+1)m\pi iff_\tau {-}2k(k+1)ff_{zz}{+}3(k-1)(2k{+}3)f_z^2\right)\\
&{-}6(16(2k{+}3)(k{+}2)m\pi iff_\tau {+}(4k^3{+}8k^2{-}11k{-}24)f_z^2)ff_z^2f_{zz}\\
&{-}24(4k^2{+}16k{+}17)m\pi iff_\tau f_z^4{-}(8k^3{+}12k^2{-}38k{-}65)f_z^6\Big),
\end{align*}
\begin{align*}
P_{k,m}=&\frac1{(2kff_{zz}{-}8m\pi iff_\tau {-}(2k{-}1)f_z^2)^{5/2}}
\Big(16m^2\pi^2f^4f_{\tau \tau z}{+}8m\pi if^3f_zf_{\tau zz}{-}f^2f_z^2f_{zzz}\\
&{-}16m^2\pi^2f^3f_zf_{\tau \tau }{+}4(2(k{+}2)m\pi iff_{zz}{-}(2k{+}7)m\pi if_z^2)f^2f_{\tau z}{-}2(k{+}2)f^2f_zf_{zz}^2\\
&{-}2((4 m\pi i(k+2)ff_\tau {-}(2k+5)f_z^2)ff_zf_{zz}{+}(4m\pi iff_\tau {-}f_z^2)(2k{+}5)f_z^3\Big),\\
Q_{k,m}=&\frac{4m\pi i(ff_{\tau zz}{-}2f_zf_{\tau z})f^2{-}f^2(f_zf_{zzz}{+}f_{zz}^2){-}(4m\pi iff_\tau {-}5f_z^2)ff_{zz}{+}(8m\pi iff_\tau {-}3f_z^2)f_z^2}
{(2kff_{zz}{-}8m\pi iff_\tau {-}(2k{-}1)f_z^2)^2},\\
R_{k,m}=&\frac{f^2f_{zzz}{-}3ff_zf_{zz}{+}2f_z^3}{(2kff_{zz}{-}8m\pi iff_\tau {-}(2k{-}1)f_z^2)^{3/2}}.
\end{align*}
The Lie algebra of the group~$G_{k,m}$ is spanned by the vector fields
\[
\langle \p_\tau,\ 2\tau\p_\tau+z\p_z,\ \tau^2\p_\tau+\tau z\p_z-(2m\pi iz^2f+k\tau f)\p_f,\ \tau\p_z-4m\pi izf\p_f,\ \p_z,\ f\p_f\rangle.
\]
Note that the invariance of the above expressions under $G_{k,m}$-action makes them Jacobi functions, that is, Jacobi forms or weight~0 and order~0.
Furthermore, their numerators and denominators are Jacobi forms of the same weight and index.
Therefore, there is an analogy between invariants of the group~$G_{k,m}$ and invariants of the groups~$G_k$ and~$\mathcal G_k$.
The expression $D=2kff_{zz}-8m\pi iff_\tau-(2k-1)f_z^2$ is a conditional invariant of the group~$G_{k,m}$
(that is, its vanishing is an invariant condition). Moreover, given a Jacobi form~$f$ of weight~$k$ and index~$m$,
the function~$D$ is a Jacobi form of weight~$2k+2$ and index~$2m$.
Indeed, $mfD=[[f,f^2]]_1$, or alternatively $D=\frac{2}{2k-1}[f,f]_{X,2}$, with the right-hand side being actually $X$-independent.
Introducing
\begin{gather*}
M:=\frac{[f,f^2]_{\frac{2k}{m(10k-3)},3}}{6k-1},\quad
Q:=\frac{(10k-3)[f,M]_{X,1}}{3m^2(2k-1)(4k-1)},\\
L:=\frac{\left[f^2,\left[f,f\right]_{X,2}\right]_{-\frac{2k}{m(8k+3)},2}+k(32k^3-14k+3)Q
-\frac{(4k+3)(4k+11)}{(2k-1)}[f,f]_{X,2}^2}
{4(2k-1)(4k-1)(4k+3)},
\end{gather*}
we can write down all the above invariants of the $G_{k,m}$-action in terms of Rankin--Cohen brackets,
\begin{gather*}
L_{k,m}=\frac{(2k-1)^2L}{4[f,f]_{X,2}^2},\quad
M_{k,m}=-\frac{(10k-3)M}{3m(2k-1)(4k-1)(6k-1)\left(\frac{2}{2k-1}[f,f]_{X,2}\right)^{3/2}},
\end{gather*}
\begin{gather*}
N_{k,m}=
\frac{\frac{1}{4(8k-7)}[f^2,L]_{\frac{-2k}{m(12k+7),2}}-\frac{2(4k^2+3k-1)}{(2k-1)}[f,f]_{X,2}Q-\frac{4k+23}{4(2k-1)}[f,f]_{X,2}L-\frac{2(10k-3)^2}{9m^2(2k-1)^2(4k-1)}M^2}{(4k-1)\frac{8}{(2k-1)^3}[f,f]_{X,2}^3}\\
P_{k,m}=\frac{
\frac{8(k+2)(10k-3)}{(2k-1)(4k-1)}\left[f,f^2\right]_{\frac{2k}{m(10k-3)},3}[f,f]_{X,2}
-\frac{10k-3}{(4k-1)(6k+5)}\left[\left[f,f^2\right]_{\frac{2k}{m(10k-3)},3},f^2\right]_{\frac{2k}{5m(2k+1)},2}}{3m(2k-1)(4k-1)(6k-1)(\frac{2}{2k-1}[f,f]_{X,2})^{5/2}},\\
Q_{k,m}=\frac{-(2k-1)^2Q}{4[f,f]_{X,2}^2},\quad
R_{k,m}=\frac{8(3k-1)[f,f^2]_{\frac1{4m(3k-1)},3}}{3(2k-1)(4k-1)(6k-1)(\frac{2}{2k-1}[f,f]_{X,2})^{3/2}}.
\end{gather*}
The expressions for the brackets with the unspecified parameter~$X$ do not involve~$X$ in their expanded forms.

Any {\it generic} third-order $G_{k,m}$-invariant involutive PDE system (governing Jacobi forms) can be obtained by expressing all third-order invariants $N_{k,m}$, $P_{k,m}$, $Q_{k,m}$, $R_{k,m}$ as functions of the second-order invariants $L_{k,m}, M_{k,m}$, see Examples~\ref{Jacobi-12} and~\ref{Jacobi-21} where we present such systems for the weak Jacobi forms $\varphi_{-1, 2}(\tau, z)$ and  $\varphi_{-2, 1}(\tau, z)$. The action of the group $G_{k,m}$ on the six-dimensional solution space of any such system is locally transitive (possesses an open orbit).

There also exist two different {\it non-generic} $G_{k,m}$-invariant involutive PDE systems, first of which contain the equation $D=0$, equivalently,
\begin{subequations}\label{fSys}
\begin{gather}\label{fSysHeat}
8m\pi i f_\tau=2kf_{zz}-(2k-1)f_z^2/f;
\end{gather}
note that the substitution $f=\mathrm e^{\varphi}$ reduces~(\ref{fSysHeat}) to a potential Burgers equation for $\varphi$, namely,
$8m\pi i \varphi_\tau=2k\varphi_{zz}+\varphi_z^2$, while the substitution $f=\psi^{2k}$ linearises equation  (\ref{fSysHeat}) to $4m\pi i \psi_t=k\psi_{zz}$.
In the non-generic case the above invariants are not defined,
and therefore we consider instead
\begin{gather*}
\mathcal L_{k,m}=\frac1{L_{k,m}},\quad
\mathcal M_{k,m}=\frac{L_{k,m}^{\frac12}}{M_{k,m}^{\frac23}},\quad
\mathcal N_{k,m}=\frac{N_{k,m}}{L_{k,m}^{\frac32}},\\
\mathcal P_{k,m}=\frac{P_{k,m}}{L_{k,m}^{\frac12}M_{k,m}},\quad
\mathcal Q_{k,m}=\frac{Q_{k,m}}{L_{k,m}},\quad
\mathcal R_{k,m}=\frac{R_{k,m}}{M_{k,m}}.
\end{gather*}
In view of equation~(\ref{fSysHeat}), the invariants $\mathcal L_{k,m}$, $\mathcal Q_{k,m}$ and $\mathcal R_{k,m}$ have fixed values,
$\mathcal L_{k,m}=0$, $\mathcal Q_{k,m}=-\frac1{4k}$, $\mathcal R_{k,m}=\frac1k$, and thus the essential invariants are~$\mathcal M_{k,m}$,
$\mathcal N_{k,m}$ and~$\mathcal P_{k,m}$, which involve $z$-derivatives of~$f$ only and are of order~4, 6 and~5, respectively.
In particular, to obtain a non-generic $G_{k,m}$-invariant involutive PDE system with a transitive $G_{k,m}$-action on the solution space,
one has to add to equation~(\ref{fSysHeat}) a sixth-order equation that is a function of~$\mathcal M_{k,m}$, $\mathcal N_{k,m}$ and~$\mathcal P_{k,m}$.
One particular choice of this kind is
\begin{gather}\label{fSysSix}
\left(\frac{(\ln f)_{zzzzz}}{(\ln f)_{zzz}}+\frac{6}{k}(\ln f)_{zz}\right)_z=0.
\end{gather}
\end{subequations}
The system~\eqref{fSys} is in involution and admits $G_{k,m}$ as a symmetry group.
It can be represented in invariant form as
\[
\mathcal L_{k,m}=0,\quad \mathcal M_{k,m}^3(\mathcal N_{k,m}-2\mathcal P_{k,m})-8=0,
\]
see Example~\ref{JacobiTheta} of Section~\ref{sec:Jac1} where we obtain a system of this kind for the Jacobi theta functions.

Another type of {\it non-generic} $G_{k,m}$-invariant involutive PDE systems is associated with the value $m=0$.
In this case, the invariants $L_{k,0}$ and~$M_{k,0}$ are functionally dependent.
Moreover, the group~$G_{k,0}$ is five-dimensional since the $f$-scalings are no longer admissible.
The Lie algebra of the group~$G_{k,0}$ is spanned by the vector fields
\[
\langle \p_\tau,\ \p_z,\ \tau\p_z,\ 2\tau\p_\tau+z\p_z-kf\p_f,\ \tau^2\p_\tau+\tau z\p_z-k\tau f\p_f\rangle.
\]
The invariants of the group~$G_{k,0}$ that are necessary for the exposition below are as follows
\begin{gather*}
\mathscr I_k=\frac{(kff_{zz}-(k+1)f_z^2)f_{\tau\tau}-kff_{\tau z}^2+(k+1)f_\tau(2f_zf_{\tau z}-f_\tau f_{zz})}{f^{1+\frac4k}(kff_{zz}-(k+1)f_z^2)},\\
\mathscr J_k= \frac{(kff_{zz}-(k+1)f_z^2)f_{\tau zz}-(kff_{\tau z}-(k+1)f_\tau f_z^2)f_{zzz}+(k+2)f_{zz}(f_zf_{\tau z}-f_\tau f_{zz})}
{f^{1+\frac4k}(kff_{zz}-(k+1)f_z^2)},\\
\mathscr L_k=\frac{f_z}{f^{1+\frac1k}},\quad
\mathscr M_k=\frac{f_{zz}}{f^{1+\frac2k}},\quad
\mathscr N_k= \frac{f_{zzz}}{f^{1+\frac3k}},
\end{gather*}
see Example~\ref{Weierstrass} of Section~\ref{sec:Jac1} where derive differential system for the Weierstrass $\wp$-function.

\subsection{Examples}
\label{sec:Jac1}
Here we provide examples of nonlinear involutive third-order PDE systems that characterise Jacobi forms uniquely up to the action of the corresponding six-dimensional symmetry group~$G_{k,m}$. We refer to~\cite{AG} for an alternative construction of linear modular differential equations satisfied by Jacobi forms.

\noindent {\bf Example~\refstepcounter{tbn}\thetbn\label{Jacobi-12}.} The weak Jacobi form $\varphi_{-1, 2}(\tau, z)$ of weight~$-1$ and index~$2$ is defined as
$\varphi_{-1, 2}(\tau, z)=\Delta^{-1/8}(\tau)\vartheta_1(\tau, 2z)$ where~$\Delta$ is the modular discriminant and $\vartheta_1$ is a Jacobi theta function, see~\cite[formula~(4.31)]{DMZ},
\[
\vartheta_1(\tau,z)=2\sum_{n=0}^\infty (-1)^{n+1}\mathrm e^{\pi i\left(n+\frac12\right)\tau}\sin((2n+1)\pi z).
\]
The function $f(\tau, z)=\varphi_{-1, 2}(\tau, z)$ satisfies the following $G_{-1, 2}$-invariant overdetermined involutive PDE system,
\begin{gather*}
2\pi iff_{\tau \tau \tau} = 2\pi if_\tau f_{\tau \tau}+f_{\tau \tau}f_{zz}-f_{\tau z}^2,\\
ff_{\tau \tau z} = 3f_zf_{\tau \tau}-2f_\tau f_{\tau z},\\
ff_{\tau zz} = 8\pi i(ff_{\tau \tau}-f_\tau^2)+2f_zf_{\tau z}-f_\tau f_{zz},\\
ff_{zzz} = 16\pi i(ff_{\tau z}-f_\tau f_z)+f_zf_{zz},
\end{gather*}
which can be obtained from the system~(\ref{eqg}) by the change of variables
\[
f(\tau,z) = g(\tilde \tau,\tilde z), \quad \tau = \frac{i\tilde\tau}\pi, \quad z = \frac{\tilde z}{2\pi}.
\]
Invariant form of the above system is
\[
N_{-1,2}=-32M_{-1,2}^2+L_{-1,2},\quad P_{-1,2}=-M_{-1,2},\quad Q_{-1,2}=-\frac14(L_{-1,2}+1),\quad R_{-1,2}=2M_{-1,2}.
\]

\noindent {\bf Example~\refstepcounter{tbn}\thetbn\label{Jacobi-21}.} The weak Jacobi form $\varphi_{-2, 1}(\tau, z)$ of weight~$-2$ and index~$1$ is defined as
$\varphi_{-2, 1}(\tau, z)=\Delta^{-1/4}(\tau)\vartheta_1^2(\tau, z)$, see~\cite[formula~(4.29)]{DMZ}.
The function $f(\tau, z)=\varphi_{-2, 1}(\tau, z)$ satisfies the following $G_{-2, 1}$-invariant overdetermined involutive PDE system,
\begin{gather*}
2\pi if^2f_{\tau \tau \tau} = (2ff_{\tau \tau}-f_\tau^2)(2\pi if_\tau+f_{zz})-f_z^2f_{\tau\tau}-2ff_{\tau z}^2+2f_\tau f_zf_{\tau z},\\
f^2f_{\tau \tau z} = f_z(2ff_{\tau \tau}-f_\tau^2),\\
f^2f_{\tau zz} = 2\pi i f(ff_{\tau \tau}-f_\tau^2)+f_z(2ff_{\tau z}-f_\tau f_z),\\
f^2f_{zzz} = 4\pi if(ff_{\tau z}-f_\tau f_z)+f_z(2ff_{zz}-f_z^2),
\end{gather*}
which can be obtained from the system~(\ref{eqg}) by the change of variables
\[
f(\tau,z) = g^2(\tilde \tau,\tilde z),\quad \tau = \frac{i\tilde\tau}\pi,\quad z = \frac{\tilde z}{\pi}.
\]
Invariant form of the above system is
\[
N_{-2,1}=-32M_{-2,1}^2+2L_{-2,1}+1,\quad P_{-2,1}=0,\quad Q_{-2,1}=-\frac18(L_{-2,1}+1),\quad R_{-2,1}=M_{-2,1}.
\]

\noindent {\bf Example~\refstepcounter{tbn}\thetbn\label{JacobiTheta}.} The Jacobi theta function~$\vartheta_1$
is a Jacobi form of index~$\frac12$ and weight~$\frac12$. It satisfies the heat equation
\begin{subequations}\label{ThetaSys}
\begin{gather}\label{ThetaSysHeat}
4\pi i(\vartheta_1)_{\tau}=(\vartheta_1)_{zz},
\end{gather}
which coincides with equation (\ref{fSysHeat}) for $k=m=\frac{1}{2}$, as well as a sixth-order equation involving $z$-derivatives of $\vartheta_3$ only,
\begin{gather}\label{ThetaSysSix}
\left(\frac{(\ln \vartheta_1)_{zzzzz}}{(\ln \vartheta_1)_{zzz}}+12(\ln \vartheta_1)_{zz}\right)_z=0,
\end{gather}
\end{subequations}
which coincides with~(\ref{fSysSix}) for $k=\frac{1}{2}$. The system~\eqref{ThetaSys} is in involution and 
admits the Jacobi group~$G_{\frac12,\frac12}$ as a symmetry group.
It also holds for Jacobi theta functions~$\vartheta_2$, $\vartheta_3$ and~$\vartheta_4$ 
due to the fact that the transformations between different theta functions belong to the group~$G_{\frac12,\frac12}$.
In a somewhat different form, the system~\eqref{ThetaSys} for Jacobi theta functions was obtained in~\cite[Example 4.1]{CF}.
We also refer to~\cite{Brezhnev2013, Pavlov} for other differential systems satisfied by the Jacobi theta functions.

The system~\eqref{ThetaSys} can be written in invariant form as
\[
\mathcal L_{\frac12,\frac12}=0,\quad \mathcal M_{\frac12,\frac12}^3(\mathcal N_{\frac12,\frac12}-2\mathcal P_{\frac12,\frac12})-8=0.
\]

\noindent {\bf Example~\refstepcounter{tbn}\thetbn\label{Weierstrass}.} The Weierstrass $\wp$-function is a Jacobi form of weight~$2$ and index~$0$,
\[
\wp(\tau,z)=\frac1{z^2}+\sum\limits_{\omega\in L/\{0\}}\left(\frac1{(z-\omega)^2}-\frac1{\omega^2}\right),
\]
where the lattice $L=\mathbb Z+\mathbb Z\tau$. Using differentiation rules from~\cite[Formulae~(37), (100) and~(101)]{Brezhnev2013},
we find that the Weierstrass $\wp$-function satisfies the involutive $G_{2,0}$-invariant system of differential equations
\begin{gather*}
\wp_{\tau \tau} = \frac{\frac{1}{16}\phi+3\pi^2(2\wp\wp_{\tau z}^2-6\wp_\tau\wp_z\wp_{\tau z}+3\wp_\tau^2\wp_{zz})}{3\pi^2(2\wp\wp_{zz}-3\wp_z^2)},\\
\wp_{\tau zz} = \frac{\frac{i}{4}\phi+\pi(36\wp\wp_z(2\wp\wp_{\tau z}-3\wp_\tau\wp_z)-12\wp_{zz}(\wp_z\wp_{\tau z}-\wp_\tau\wp_{zz}))}{3\pi(2\wp\wp_{zz}-3\wp_z^2)}, \\
\wp_{zzz} = 12\wp\wp_z,\quad \text{where} \quad \phi:=16\wp_{zz}^3-72\wp^2\wp_{zz}^2 -216\wp\wp_z^2\wp_{zz}+54\wp_z^4+864\wp^3\wp_z^2.
\end{gather*}
Given $\wp_z^2=4\wp^3-g_2(\tau)\wp-g_3(\tau)$ where $g_2(\tau)=\frac{4}{3}\pi^4E_4(\tau)$ and $g_3(\tau)=\frac{8}{27}\pi^6E_6(\tau)$, we obtain~$\phi=-2(g_2^3-27g_3^2)=-2(2\pi)^{12}\Delta(\tau)$.
Note that the first of the above equations can be written in a symmetric Monge--Amp\`ere form,
\[
2\wp(\wp_{zz}\wp_{\tau \tau}-\wp_{z\tau}^2)=3\wp_z^2\wp_{\tau \tau}-6\wp_\tau\wp_z\wp_{\tau z}+3\wp_\tau^2\wp_{zz}-\frac{(2\pi)^{10}}{6}\Delta(\tau).
\]
The invariant form of the above system is
\begin{gather*}
\mathscr N_2=12\mathscr L_2,\quad \mathscr J_2=-4\pi i\mathscr I_2, \\
24\pi^2(2\mathscr M_2-3\mathscr L_2^2)\mathscr I_2=8\mathscr M_2^3-36\mathscr M_2^2-108\mathscr L_2^2\mathscr M_2+27\mathscr L_2^2(\mathscr L_2^2+16).
\end{gather*}

\section{Concluding remarks}

\begin{itemize}

\item The results of this paper can be extended to other types of modular forms  (such as Siegel modular forms, Picard modular forms, etc), namely, every modular form $f$ on a discrete subgroup $\Gamma$ of a Lie group $G$ should solve a nonlinear PDE system $\Sigma$ such that:
\begin{itemize}
\item system $\Sigma$ is involutive (compatible);
\item system $\Sigma$ is of finite type (has  finite-dimensional solution space);

\item system $\Sigma$ is $G$-invariant, furthermore, the Lie group $G$ acts on the solution space of $\Sigma$ locally transitively and with an open orbit (thus, the dimension of the solution space of $\Sigma$ equals ${\rm dim}\,G$);

\item the modular form $f$ is a generic solution of system $\Sigma$ ($f$ belongs to the open orbit), in particular, solution $f$ has discrete stabiliser $\Gamma$;

\item system $\Sigma$ is expressible via algebraic relations among differential invariants of a suitable action of $G$.

\end{itemize}
In the case of classical modular forms $f$ considered in this paper, we have: $\Gamma=\textrm{SL}(2, \mathbb Z)$, $G=\textrm{SL}(2, \mathbb R)$, and system $\Sigma$ is a third-order nonlinear $\textrm{SL}(2, \mathbb R)$-invariant ODE for $f$.

In particular, involutive differential systems for Siegel modular forms should be based on differential invariants of the symplectic group $\textrm{Sp}(2g)$. Some results in this direction are already available, thus, differential systems for theta constants were discussed in~\cite{O2} (genus $g=2$) and~\cite{Zudilin} (general~$g$).

\item Although modular forms provide {\it generic} solutions of the ODEs discussed in this paper, the same ODEs possess {\it non-generic} rational solutions which may also be of interest. Thus, as already mentioned in the introduction, the integrability condition for the Lagrangian density $u_xu_yf(u_t)$ is the fourth-order ODE~(\ref{E_13Ord4}) for~$f(\tau)$. The generic solution of this ODE is the Eisenstein series, $f(\tau)=E_{1,3}(\tau)$, however, it also possesses a simple non-generic solution $f(\tau)=\tau$ which corresponds to the Lagrangian density $u_xu_yu_t$ (with interesting properties, see~\cite{FKT}).

Same applies to differential systems for Jacobi forms.


\item Every third-order ODE with $G_k$ symmetry can be linearised by a standard procedure as discussed, e.g., in \cite{Clarkson}: the general solution $f(\tau)$ of any such equation can be represented parametrically as
\begin{equation}
\tau=\frac{\tilde w}{w}, \qquad f=\frac{w^k}{W^{k/2}}
\label{sub}
\end{equation}
where $w(s)$ and $\tilde w(s)$ are two linearly independent solutions of a second-order linear equation $w_{ss}+p w_s+q w=0$, and
$W=\tilde w_sw-w_s\tilde w$ is the Wronskian of~$w$ and~$\tilde w$.
Here the coefficients~$p(s)$ and~$q(s)$ depend on the third-order ODE and can be efficiently reconstructed. The details are as follows. Differentiating the second equation (\ref{sub}) with respect to $\tau$ using the relations $\frac{ds}{d\tau}=\frac{w^2}{W}$ and $W_s=-pW$, one obtains
\[
I_k=(2\pi i k)^2\delta \quad {\rm and} \quad J_k=(2\pi i k)^3\delta_s
\]
where $\delta=\frac{1}{2}p_s-q+\frac{1}{4}p^2$ and $I_k, J_k$ are the invariants from Section \ref{sec:invar}. Thus, for a given third-order ODE $F(I_k, J_k)=0$, one has to choose the coefficients $p(s),\, q(s)$ such that $F((2\pi i k)^2\delta,\, (2\pi i k)^3\delta_s)=0$.

This linearisation procedure leads to familiar parametrisations of modular forms by hypergeometric functions.

\

\end{itemize}

\section*{Acknowledgements}

We thank F. Cl\'ery, G. van der Geer, M. Pavlov, R.O. Popovych, A. Prendergast-Smith,  F.~Stromberg, A. Veselov, C. Wuthrich and V. Zudilin for useful discussions.
The research of SO was supported by the NSERC Postdoctoral Fellowship program.

\end{document}